\DeclareFontFamily{OMX}{MnSymbolE}{}
\DeclareFontShape{OMX}{MnSymbolE}{m}{n}{
    <-6>  MnSymbolE5
   <6-7>  MnSymbolE6
   <7-8>  MnSymbolE7
   <8-9>  MnSymbolE8
   <9-10> MnSymbolE9
  <10-12> MnSymbolE10
  <12->   MnSymbolE12}{}
\DeclareSymbolFont{mnlargesymbols}{OMX}{MnSymbolE}{m}{n}
\DeclareMathDelimiter{\llangle}{\mathopen}{mnlargesymbols}{'164}{mnlargesymbols}{'164}
\DeclareMathDelimiter{\rrangle}{\mathclose}{mnlargesymbols}{'171}{mnlargesymbols}{'171}
\theoremstyle{plain}
\newtheorem{lemm}{Lemma}
\newtheorem{theo}{Theorem}
\newtheorem{prop}{Proposition}
\newtheorem{coro}{Corollary}
\theoremstyle{definition}
\newtheorem{remark}{Remark}
\newtheorem{prot}{Protocol}
\newcommand{\cD}{\mathcal{D}}
\newcommand{\cE}{\mathcal{E}}
\newcommand*{\QEDA}{\hfill\ensuremath{\blacksquare}}%
\newcommand{\ii}{\iota}
\DeclareMathOperator{\Tr}{Tr}
\DeclareMathOperator{\id}{id}
\newcommand\sff{\mathsf{f}}
\newcommand\sg{\mathsf{g}}
\newcommand\sh{\mathsf{h}}
\newcommand\sm{\mathsf{m}}
\newcommand\sr{\mathsf{r}}
\newcommand\CC{\mathrm{CC}(\Phi)}
\newcommand{\reR}{R}
\newcommand{\reQ}{Q}
\newcommand{\reAns}{\mathop{Ans}}
\newcommand{\fans}{\mathsf{ans}}
\newcommand\sA{\mathsf{A}}
\newcommand\sB{\mathsf{B}}
\newcommand\sC{\mathsf{C}}
\newcommand\sI{\mathsf{I}}
\newcommand\sM{\mathsf{M}}
\newcommand\sX{\mathsf{X}}
\newcommand\sZ{\mathsf{Z}}
\newcommand\bZ{\mathbb{Z}}
\tikzstyle{block} = [rectangle, draw, 
\tikzstyle{rblock} = [rectangle, draw, 
\tikzstyle{circ} = [circle, draw, inner sep=0em, minimum width=2em,
\tikzstyle{circ2} = [circle, draw, inner sep=0em, minimum width=2.2em,
\tikzstyle{rec} = [rectangle, draw]
\tikzstyle{line} = [draw, -latex]
\tikzstyle{imp-line} = [draw, -implies, double equal sign distance]
\tikzset{snake arrow/.style=
{
decorate,
decoration={snake,amplitude=.4mm,segment length=2mm,pre length=1mm, post length=1mm}}
}
\tikzset{snake arrow0/.style=
{
decorate,
decoration={snake,amplitude=.4mm,segment length=2mm,post length=1mm}}
}
\def\Label#1{\label{#1}\ [\ \text{#1}\ ]\ }
\def\Label{\label}
\newcommand{\red}[1]{{\leavevmode\color{red}#1}}
\begin{document}
\title{Prior Entanglement Exponentially Improves One-Server 
Quantum Private Information Retrieval for Quantum Messages}
\author{Seunghoan~Song}
\email{seunghoans@gmail.com}
\affiliation{Graduate School of Mathematics, Nagoya University, Chikusa-ku, Nagoya, 464-8602, Japan}
\author{François Le Gall}
\email{legall@math.nagoya-u.ac.jp}
\affiliation{Graduate School of Mathematics, Nagoya University, Chikusa-ku, Nagoya, 464-8602, Japan}
\author{Masahito Hayashi}
\email{hmasahito@cuhk.edu.cn, hayashi@iqasz.cn}
\affiliation{School of Data Science, The Chinese University of Hong Kong, Shenzhen, Longgang District, Shenzhen, 518172, China}
\affiliation{International Quantum Academy (SIQA), Futian District, Shenzhen 518048, China}
\affiliation{Graduate School of Mathematics, Nagoya University, Chikusa-ku, Nagoya, 464-8602, Japan}

\begin{abstract}
Quantum private information retrieval (QPIR) for quantum messages is a quantum communication task, in which a user retrieves one of the multiple quantum states from the server without revealing which state is retrieved.
In the one-server setting, we find an exponential gap in the communication complexities between the presence and absence of prior entanglement in this problem with the one-server setting.
To achieve this aim, as the first step, we prove that the trivial solution of downloading all messages 
is optimal under QPIR for quantum messages, which is a similar result to that of classical PIR but different from QPIR for classical messages.
As the second step, we propose an efficient one-server
one-round QPIR protocol with prior entanglement by constructing a reduction from a QPIR protocol for classical messages to a QPIR protocol for quantum messages in the presence of prior entanglement.
\end{abstract}

\maketitle

\if0
\begin{IEEEkeywords}
private information retrieval, quantum private information retrieval,
one-server model,
honest-server model,
quantum message
\end{IEEEkeywords}

\fi

\section{Introduction}
\subsection{Private information retrieval (PIR)}
Entanglement is a valuable resource for quantum information processing, 
enabling various tasks including quantum teleportation \cite{BBCJPW93} 
and dense coding, also known as entanglement-assisted communication \cite{BW}. 
Although entanglement-assisted communication enhances the speed 
not only for conventional communication but also for secret communication,
their improvements are limited to constant times \cite{WDLLL,WLH}.
For further development of entanglement-assisted communication,
we need to find significant improvement by entanglement-assisted communication.

For this aim, we focus on private information retrieval (PIR),
a task in which a user retrieves a message from a server without revealing which message has been retrieved, when the server possesses multiple messages. 
Many papers \cite{KdW03, KdW04,Ole11,BB15,LeG12,KLLGR16,ABCGLS19,SH19,SH19-2,SH20,AHPH20,ASHPHH21,KL20,WKNL21-1,WKNL21-2}
studied Quantum PIR (QPIR), i.e., PIR using quantum states, 
when the intended messages are given as the classical messages.
This problem setting is simplified to C-QPIR. 
On the other hand, since various types of quantum information processings require the transmission of quantum states, i.e., 
the quantum messages \cite{Wie83,GC01,Moc07,CK09,ACG+16},
it is needed to develop QPIR for quantum messages, which is simplified to Q-QPIR, 
while no preceding paper studied this topic.

In this paper, to enhance quantum information technology,
we study private information retrieval for quantum messages with one server,
and present an exponential speedup through the use of prior entanglement
as a significant improvement.
Although there have been mainly two approaches: PIR with computational assumptions \cite{CMS99, Lipmaa10} and PIR with multiple servers \cite{BS03,Yekanin07,DGH12},
recent attention has focused on information-theoretic aspects of PIR \cite
{CHY15,SJ17, SJ17-2, SJ18, BU18, FHGHK17, KLRG17, LKRG18, TSC18-2, WS17,  Tandon17,  BU19,L19, KGHERS19, TGKFH19}.
In this paper, we solely consider one-server QPIR without computational assumptions.

\subsection{QPIR for classical messages}
PIR has also been studied when quantum communication is allowed between the user and the server  \cite{KdW03,KdW04,Ole11,BB15,LeG12,KLLGR16,ABCGLS19,SH19,SH19-2,SH20,AHPH20,ASHPHH21,KL20,WKNL21-1,WKNL21-2}.
These papers consider the case when the total number of bits in the messages is $\sm$.
For the secrecy in C-QPIR, 
we often focus on the potential information leakage in all rounds, which is called
the {\it all-round criterion} in this paper
and has been studied under several security models.
One is the {\em honest-server model}, 
in which,
we discuss the user's secrecy only when the server is honest, i.e., 
the server does not deviate from the protocol.
The other is the {\em specious-server model}, in which,
we discuss the user's secrecy even when the server deviates from the protocol as far as its dishonest operations are not revealed to the user, which is called {\em specious adversary}.
The secrecy under the specious-server model has a stronger requirement than 
the secrecy under the honest-server model.
Interestingly, under the honest-server model,
Le Gall \cite{LeG12} proposed a C-QPIR protocol with communication complexity 
$O(\sqrt{\sm})$ in the all-round criterion, and
Kerenidis et al. \cite{KLLGR16} improved this result to $O(\mathrm{poly}\log \sm)$
in another criterion, 
where the communication complexity in the quantum case is the total number of communicated qubits.
However, when 
the specious-server model is adopted and the possible input states are extended to arbitrary superposition states,
Baumeler and Broadbent \cite{BB15} proved that the communication complexity is at least $\Theta(\sm)$, i.e., the trivial solution of downloading all messages is optimal also for this case.
Even when prior entanglement is allowed between the user and the server,
the communication complexity is also lower bounded by $\Theta(\sm)$ under the specious-server model with the above extended possible input states
\cite{ABCGLS19}.
Therefore, 
the advantage of prior entanglement is limited under the specious-server model with the above extended possible input states.
In contrast, 
prior entanglement might potentially have polynomial improvement under the honest-server model, 
but it is still unclear how much prior entanglement improves communication complexity under the honest-server model.

When the server truly follows the protocol, the information obtained by the server
is limited to the server's final state.
Hence, the information leakage in the server's final state can be considered as another criterion, which is called the {\it final-state criterion}.
While the final-state criterion under the honest-server model is a too weak setting,
it is reasonable to consider the final-state criterion under the specious-server model, which is essentially 
equivalent to the cheat-sensitive setting studied in \cite{GLM}.

\if0
\red{For the one-round case, 
Kerenidis and de Wolf \cite[Section 2.3]{KdW03} discusses the one-round case 
by considering that this problem is equivalent to random access code.
However, as shown in this paper,
there is a one-round one-server C-QPIR protocol to achieve a performance that cannot be achieved by 
random access code.
Hence, the performance of one-round one-server C-QPIR protocols cannot be immediately derived from 
the analysis on random access code.}
\fi

\begin{figure}[t]
\begin{center}
        \resizebox {1\linewidth} {!} {
\begin{tikzpicture}[scale=0.5, node distance = 3.3cm, every text node part/.style={align=center}, auto]
	\node [rblock] (serv) {Server};
	\node [rblock, right=12em of serv] (user) {User};
	\node [above=3em of serv] (1) {1. message states\\$\rho_1, \ldots , \rho_{\sff}$};
	\node [above=3em of user] (2) {2. target index $k\in[\sff]$};
	\node [below=3em of user] (5) {5. retrieved state $\rho_k$};

	\path [line] (serv.22 -| user.west) --node[above]{3. queries\\$Q^{(1)},\ldots,Q^{(\sr)} $} (serv.22);
	\path [line] (serv.-22) --node[below]{4. answers\\$A^{(1)}, \ldots,A^{(\sr)}$} (serv.-22 -| user.west);
	\path [line] (1) -- (serv.north);
	\path [line] (2) -- (user.north);
	\path [line] (user.south) -| (5);
	
\end{tikzpicture}
}
\caption{One-server QPIR protocol with quantum messages. At round $i$, the user uploads a query $Q^{(i)}$ and downloads an answer $A^{(i)}$.}
\Label{fig:one-server}
\end{center}
\end{figure}
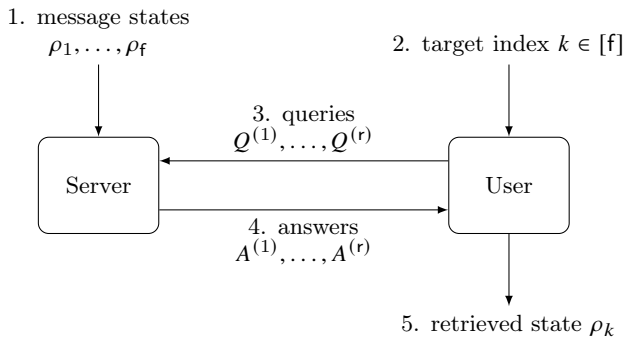

\subsection{Our contributions} 
In this paper, for Q-QPIR protocols and the total number $\sm$ of qubits,
we show that the communication complexity is at least 
$\Theta(\sm)$, i.e., the trivial solution of downloading all messages is optimal for one-server Q-QPIR even in the final-state criterion and even with the honest-server model
if prior entanglement is not allowed between the server and the user.
This fact shows that prior entanglement between the server and the user
is necessary for further improvement 
under the one-server model 
even for Q-QPIR under the honest-server model, the weakest secrecy requirement.
To overcome this problem, we propose a one-server Q-QPIR protocol with 
prior entanglement between the server and the user,
which achieves the communication complexity $O(\log \sm)$.
That is, prior entanglement has exponential improvement for 
Q-QPIR under the honest-server model.


\subsection{Organization of this paper}
The remainder of the paper is organized as follows.
Section \ref{S2} gives the definitions of several concepts
and the outline of our results including the comparison with existing results.
Section~\ref{sec:prelim} is the technical preliminaries of the paper.
Section~\ref{C-QPIR-pro} presents our results for 
C-QPIR protocol with communication complexity $O( \log \sm)$.
Section~\ref{sec:blind-one-opt}
derives the lower bound of the communication complexity for Q-QPIR in the final-state criterion under the honest-server model when prior entanglement is not shared.
Section~\ref{sec:ent} proposes an efficient Q-QPIR protocol with prior entanglement under various settings.
Section~\ref{sec:conclusion} is the conclusion of the paper.

\section{Definitions and Outline of our Results}\Label{S2}
\subsection{Definitions of various concepts}
To briefly explain our results, we prepare 
the definitions of various concepts to cover C-QPIR protocols and
Q-QPIR protocols in a common framework. 
\subsubsection{Correctness, complexity, and unitary-type}
To discuss the properties of our QPIR protocols, 
we prepare several concepts.
First, we define the set $\mathcal{S}$ of possible quantum states 
as a subset of the set ${\cal S}({\cal H}_d)$ of states on $\mathbb{C}^d $.
A QPIR protocol is called a QPIR protocol with $\mathbb{C}^d $
over the set $\mathcal{S}$ 
when it works when the set $\mathcal{S}$ is the set of possible quantum states. 
For example, 
when $\mathcal{S}$ is the set ${\cal C}$ of orthogonal pure states $\{ |j\rangle\}_{j=0}^{d-1}$,
a QPIR protocol is a C-QPIR protocol discussed in \cite{BB15}.
In contrast, when $\mathcal{S}$ is 
the set ${\cal Q}$ of all pure states on the system $\mathbb{C}^d $,
a QPIR protocol is a Q-QPIR protocol.
When we do not identify the set $\mathcal{S}$, we consider that it is given as the above case.
We denote the number of messages by $\sff $. 
A QPIR protocol $\Phi$ has two types of inputs.
The first input is composed of $\sff$ messages, whose systems are written as
${\cal H}_1, \ldots, {\cal H}_{\sff}$.
Their state is written as $\sff$ states $(\rho_1, \ldots, \rho_\sff) \in 
\mathcal{S}^\sff$.
The second input is the choice of the label of the message intended by the user, which is written as the random variable $K$.
The quantum system to describe the variable $K$ is denoted by ${\cal K}$.
We denote the remaining initial user's and server's systems 
by ${\cal R}_u$ and ${\cal R}_s$, respectively.
The output of the protocol is a state $\rho_{out}$ on ${\cal H}_d$.

A QPIR protocol $\Phi$ has bilateral communication.
The communication from the user to the servers is the upload communication,
and  
the communication from the servers to the users is the download communication.
The communication complexity is composed of 
the upload complexity and the download complexity.
The upload complexity is the sum of the communication sizes of all upload communications, and
the download complexity is the sum of the communication sizes of all 
download communications.
The sum of the upload and download complexity is called
the communication complexity.
We adopt the communication complexity
as the optimality criterion under various security conditions.

A QPIR protocol $\Phi$ is called a deterministic protocol 
when the following two conditions hold. 
The upload complexity and the download complexity are determined only by the protocol $\Phi$.
When the user and the servers are honest,
the output is determined only by $(\rho_1, \ldots, \rho_\sff)$ and $K$.
When $\Phi$ is a deterministic protocol, we denote the output state 
by $\Phi_{out}(\rho_1, \ldots, \rho_\sff,K)= \rho_{out}$.
The upload complexity, the download complexity, and the communication complexity are 
denoted by $UC(\Phi)$, $DC(\Phi)$, and $CC(\Phi)$, respectively.
Hence, the communication complexity $CC(\Phi)$ is calculated as
$UC(\Phi)+DC(\Phi)$.
A protocol $\Phi$ is called correct when 
the protocol is a deterministic protocol and 
the relation $\Phi_{out}(\rho_1, \ldots, \rho_\sff,k)=\rho_k$ holds
for any elements $k \in [\sff]$ and 
$(\rho_1, \ldots, \rho_\sff) \in \mathcal{S}^\sff$.

Another important class of QPIR protocols is unitary-type protocols.
When a QPIR protocol $\Phi$ satisfies the following conditions,
it is called {\it unitary-type}.
\begin{itemize}
\item
The initial states $\rho_{{\cal R}_s}$ on ${\cal R}_s$
and $\rho_{{\cal R}_u}$ on ${\cal R}_u$ are pure.
\item
At each round, both the user and the server apply only unitary operations to 
the systems under their control.
\item
A measurement is done only when the user reads out the message as the outcome of the protocol.
\end{itemize}
The reference \cite{ABCGLS19} refers to the above property as measurement-free due to the third condition while it assumes the first and second conditions implicitly.
Since the first and second conditions are more essential, we call it unitary-type.

\subsubsection{Secrecy}
In this paper, we address only the secrecy of the user's choice.
There are two security criteria.
One is the final-state criterion, in which,
it is required that the server's final state does not depend on the user's 
choice $K$.
The other is the all-round criterion, in which,
it is required that the server's state in any round does not depend on the user's 
choice $K$.
When we consider the secrecy, we may extend the set of possible inputs to 
$\tilde{\cal S}$ that includes the set ${\cal S}$.
For example, 
in the case of C-QPIR, the set ${\cal S}$ is given as
the set ${\cal C}$. Then, 
we can choose $\tilde{\cal S}$ as the set ${\cal C}$ or ${\cal Q}$.
The case with $\tilde{\cal S}={\cal C}$ is called the classical input case,
and the case with $\tilde{\cal S}={\cal Q}$ is called the superposition input case.
Instead, in the case of Q-QPIR, the set ${\cal S}$ is given as
the set ${\cal Q}$. Hence, the set $\tilde{\cal S}$ is chosen as the same set ${\cal Q}$.

Even when we fix the security criterion and the sets ${\cal S}$ and $\tilde{\cal S}$,
there still exist three models for the secrecy for a QPIR protocol $\Phi$.
The first one is the honest-server model, which assumes that 
the servers are honest.
We say that a QPIR protocol $\Phi$ satisfies the secrecy in the final-state criterion
under the honest-server model with input states $\tilde{\cal S}$
when the following condition holds. 
When the user and the servers are honest,
the server has no information for $K$ in the final state, i.e., the relation 
\begin{align}
\rho_{S,F}(\rho_1, \ldots, \rho_\sff,k)=\rho_{S,F}(\rho_1, \ldots, \rho_\sff,k')
\Label{MLK}
\end{align}
holds for any $k,k' \in [\sff]$ and
$(\rho_1, \ldots, \rho_\sff) \in \tilde{\mathcal{S}}^\sff$,
where $\rho_{S,F} (\rho_1, \ldots, \rho_\sff,K)$ 
is the final state on the server dependent of the variable $K$.
In the condition \eqref{MLK}, the states $\rho_k$ is chosen from 
$\tilde{\mathcal{S}}$, not from ${\mathcal{S}}$.
We say that a QPIR protocol $\Phi$ satisfies the secrecy in the all-round criterion
under the honest-server model with input states $\tilde{\cal S}$
when the following condition holds,
the server has no information for $K$ in all rounds, i.e.,
the relation 
\begin{align}
\rho_{S,j}(\rho_1, \ldots, \rho_\sff,k)=\rho_{S,j}(\rho_1, \ldots, \rho_\sff,k')
\Label{MLK2}
\end{align}
holds for any $k,k' \in [\sff]$ and
$(\rho_1, \ldots, \rho_\sff) \in \tilde{\mathcal{S}}^\sff$,
where $\rho_{S,j} (\rho_1, \ldots, \rho_\sff,K)$ 
is the state on the server dependent of
the variable $K$ when the server receives the query in the $j$-th round.
The following is the meaning of the secrecy 
in the all-round criterion
under the honest-server model.
Assume that the user and the server are honest.
Even when the server stops the protocol at the $j$-th round for any $j$,
the server cannot obtain any information for $K$.

The second model is called the specious-server model introduced in \cite{DNS10}.
When the server applies other operations that deviate from the original protocol,
such an operation is called an attack. 
An attack of the server is called a specious attack
when the attack satisfies the following conditions.
The server sends the answer at the time specified by the protocol, but the contents of 
the answer do not follow the protocol. 
Also, the server does not access the information under the control of the user.
In addition, the attack is not revealed to the user
under the condition that the user is honest, i.e.,
there exists the server's operation ${\cal F}_{S,j}$ such that
the relation 
\begin{align}
({\cal F}_{S,j}\otimes \iota) 
\tilde{\rho}_{j}(\rho_1, \ldots, \rho_\sff,k)=
\rho_{j}(\rho_1, \ldots, \rho_\sff,k)
\Label{MLK3}
\end{align}
holds for any $k,k' \in [\sff]$ and
$(\rho_1, \ldots, \rho_\sff) \in \tilde{\mathcal{S}}^\sff$,
where 
$\rho_{j}(\rho_1, \ldots, \rho_\sff,K)$ 
($\tilde{\rho}_{j}(\rho_1, \ldots, \rho_\sff,K)$) 
is the state on the whole system dependently of
the variable $K$ when the user receives the answer in the $j$-th round
under the assumption that the user is honest and 
the server is honest (the server makes the attack).
Notice that the definition of a specious attack depends on 
the choice of the set $\tilde{\cal S}$.
The meaning of \eqref{MLK3} is the following.
When the user decides to stop the protocol to check whether 
the server follows the protocol
after the user receives the answer in the $j$-th round,
the user asks the server to submit the evidence that 
the server follows the protocol.
Then, the server sends his system after applying the operation ${\cal F}_{S,j}$.
When $\tilde{\cal S} $ is chosen to be the set ${\cal Q}$ of pure states,
a specious attack coincides with a 0-specious adversary
in the sense of \cite[Definition 2.4]{ABCGLS19}
because it is sufficient to check the case with even $t$ in \cite[Definition 2.4]{ABCGLS19}.
Also, when $\tilde{\cal S} $ is chosen to be the set ${\cal C}$,
the secrecy in the all-round criterion under the specious server model 
coincides with the anchored 0-privacy under 0-specious servers \cite{ABCGLS19}.

We say that a QPIR protocol $\Phi$ satisfies the secrecy in the final-state criterion 
(the all-round criterion)
under the specious-server model with input states $\tilde{\cal S}$ 
when the following condition holds. 
When a server performs a specious attack and the user is honest,
the server obtains no information about the user's request $K$
in all rounds, i.e., the condition \eqref{MLK} (the condition \eqref{MLK2}) holds.
In fact, the secrecy condition in the final-state criterion is weaker than 
the secrecy condition in the all-round criterion even under the specious-server model.
The secrecy condition in the final-state criterion under the specious-server model
is essentially equivalent to the cheat-sensitive secrecy condition considered in \cite{GLM}.

The third model is called the dishonest-server model.
We say that a QPIR protocol $\Phi$ satisfies the secrecy under 
the dishonest-server model
when the following condition holds. 
When the server applies an attack and
the user is honest,
the server obtains no information of the user's request $K$, i.e., 
the condition \eqref{MLK} holds.
In the dishonest-server model, the server is allowed to make any attack
under the following condition.
The server sends the answer at the time specified by the protocol, but the contents of the answer do not follow the protocol. 
Also, the server does not access the information under the control of the user.
Thus, the server can send any information on each round 
under this condition.
Hence, the ability of the attack does not depend on the set $\tilde{\cal S}$.
Also, the server can store the state received in any round.
Hence, 
the server can obtain the same information in the final state as the information in the $j$-th round.

Further, when the protocol has only one round
and we adopt the all-round criterion,
there is no difference among 
the honest-server model, the specious-server model, and
the dishonest-server model 
because all information obtained by the server is reduced to the state on the server 
when the server received the query in the first round. 
As a result, the information obtained by the server does not depend on 
the server's operation, i.e., the server's attack model.

\begin{remark}
In the papers \cite{BB15,ABCGLS19},
the security against specious adversaries means
the secrecy in the all-round criterion under the specious-server model with input states
${\cal Q}$ for C-QPIR in our definition.
Instead, in the paper \cite{ABCGLS19},
the anchored specious security means
the secrecy in the all-round criterion under the specious-server model with input states
${\cal C}$ for C-QPIR in our definition.
The papers \cite{BB15,ABCGLS19} did not consider the final-state criterion.
\end{remark}

\begin{table*}[ht]
\caption{Optimal communication complexity of one-server C-QPIR}\Label{tab:existing-results}
\begin{center}
\begin{tabular}{|c|c|c|c|c|c|c|}
\hline
\multirow{2}{*}{security} && & \multicolumn{4}{c|}{Optimal communication complexity}
\\
 \cline{4-7}
 \multirow{2}{*}{criterion}
&  input & server & \multicolumn{2}{c|}{without PE} & \multicolumn{2}{c|}{with PE}
\\
 \cline{4-7}
&&& one-round & multi-round& one-round & multi-round \\
\hline
\multirow{4}{*}{final-} & \multirow{4}{*}{classical} & \multirow{2}{*}{honest} 
&$ O(\log \sm)$*  &$ O(\log \sm)$* &
$ O(\log \sm)$* &$ O(\log \sm)$* \\
& & &[Section \ref{sec-one}] &[Section \ref{sec-one}]&
[Section \ref{sec-one}]& \cite{KLLGR16}+[Lemma \ref{prop:sec2}] \\
 \cline{3-7}
 && \multirow{2}{*}{specious} & $ O(\log \sm)$* 
&$ O(\log \sm)$* &$ O(\log \sm)$* 
& $O(\log \sm)$*  \\
 && &  [Section \ref{sec-one2}]
& [Section \ref{sec-one2}]& [Section \ref{sec-one2}]
& \cite{KLLGR16}+[Corollary \ref{CCor1}] \\
 \cline{2-7}
\multirow{3}{*}{state} & \multirow{3}{*}{superposition} & \multirow{2}{*}{honest} &
$ O(\log \sm)$* & $ O(\log \sm)$* &
$ O(\log \sm)$* & $ O(\log \sm)$*  \\
& &  &
  [Section \ref{sec-one}]& [Section \ref{sec-one}]&
  [Section \ref{sec-one}]& [Section \ref{sec-one}] \\
 \cline{3-7}
 & & specious & ?&?&?&? \\
\hline
\multirow{3}{*}{all-} & \multirow{4}{*}{classical} & \multirow{2}{*}{honest} & &$O(\mathrm{poly}\log \sm)$* && $O(\log \sm)$* \\
& & & & \cite{KLLGR16}+[Lemma \ref{prop:sec1}]&& 
\cite{KLLGR16}+[Lemma \ref{prop:sec2}] \\
 \cline{3-3}\cline{5-5} \cline{7-7}
 & & \multirow{2}{*}{specious} & &$O(\mathrm{poly}\log \sm)$*&& $O(\log \sm)$* \\
\multirow{3}{*}{round}
 & & &$ \Theta(\sm)$ \cite{BB15} & \cite{KLLGR16}+[Corollary \ref{CCor1}]&
 $ \Theta(\sm)$ \cite{ABCGLS19}& \cite{KLLGR16}+[Corollary \ref{CCor1}]\\
  \cline{2-3}\cline{5-5} \cline{7-7}
 & \multirow{2}{*}{superposition} & honest &
 & ? && ? \\
 \cline{3-3}\cline{5-5} \cline{7-7}
& & specious & & $ \Theta(\sm)$ \cite{BB15}&& $ \Theta(\sm)$ \cite{ABCGLS19}\\
 \cline{1-3} \cline{5-5} \cline{7-7}
\multicolumn{3}{|c|}{dishonest} & &$ \Theta(\sm)$ \cite{BB15}&& $ \Theta(\sm)$ \cite{ABCGLS19}\\
\hline
\end{tabular}
\end{center}
\begin{flushleft}
$\sm$ is the total size of messages.
* expresses the case when each message size is fixed.
The symbol \cite{KLLGR16}+ [Lemma \ref{prop:sec2}] shows that 
the protocol was proposed in \cite{KLLGR16}, but
its secrecy is shown in Lemma \ref{prop:sec2} of this paper. 
These notations are applied to Table \ref{tab:blind} as well. 
\end{flushleft}
\end{table*}

\begin{table*}[ht]
\caption{Optimal communication complexity of one-server Q-QPIR}\Label{tab:blind}
\begin{center}
\begin{tabular}{|c|c|c|c|c|c|}
\hline
\multirow{2}{*}{security} &&  \multicolumn{4}{c|}{Optimal communication complexity}
\\
 \cline{3-6}
 \multirow{2}{*}{criterion}
& server  & \multicolumn{2}{c|}{without PE} & \multicolumn{2}{c|}{with PE}
\\
 \cline{3-6}
&& one-round & multi-round& one-round & multi-round \\
\hline
final- & honest &
$ \Theta(\sm)$ [Theorem \ref{theo:multiround}] &
$ \Theta(\sm)$ [Theorem \ref{theo:multiround}]&
$ O(\log \sm)$* [Corollary \ref{CO2}]&$ O(\log \sm)$* [Corollary \ref{CO2}] \\
 \cline{2-6}
state & specious & $ \Theta(\sm)$ [Theorem \ref{theo:multiround}] &
$ \Theta(\sm)$ [Theorem \ref{theo:multiround}] &
$ O(\log \sm)$* [Corollary \ref{CO2}]&
$ O(\log \sm)$* [Corollary \ref{CO2}] \\
\hline
all- & honest & \multirow{2}{*}{$\Theta(\sm)$} 
 &$ \Theta(\sm)$ [Theorem \ref{theo:multiround}] &
 \multirow{2}{*}{$\Theta(\sm)$}& $O(\log \sm)$* [Corollary \ref{CO3}] \\
 \cline{2-2}\cline{4-4} \cline{6-6}
round & specious & 
\multirow{2}{*}{implied by \cite{BB15}}
 &$ \Theta(\sm)$ implied by \cite{BB15}& \multirow{2}{*}{implied by \cite{ABCGLS19}}
 & $ \Theta(\sm)$ implied by \cite{ABCGLS19}\\
 \cline{1-2} \cline{4-4} \cline{6-6}
 \multicolumn{2}{|c|}{dishonest} & 
  &$ \Theta(\sm)$ implied by \cite{BB15}&& $ \Theta(\sm)$ implied by \cite{ABCGLS19}\\
\hline
\end{tabular}
\end{center}
\begin{flushleft}
This table employs the same notations as Table \ref{tab:existing-results}.
\end{flushleft}
\end{table*}

\subsection{Outline of results and comparison}
\subsubsection{Optimality of trivial solution for one-server Q-QPIR}
First, we discuss our result for one-server
Q-QPIR for the honest-server model without prior entanglement,
and its relation to existing results.
The result by the reference \cite{BB15} is summarized as follows.
The C-QPIR protocol discussed in \cite{BB15}
is considered as a QPIR protocol over the set ${\cal C}$.
The reference \cite{BB15} showed that 
the trivial protocol over the set ${\cal C}$
is optimal in the all-round criterion
under the specious-server model with input states ${\cal Q}$, 
i.e., when the secrecy in the all-round criterion
is imposed under the specious-server model with input states ${\cal Q}$.
Since the set ${\cal C}=\{ |j\rangle\}_{j=0}^{d-1}$ is 
included in the set ${\cal Q}$,
a Q-QPIR protocol over the set $\mathcal{Q}$ 
works a QPIR protocol over the set ${\cal C}$.
Hence, the result by \cite{BB15} implies the optimality of the trivial protocol
over the set ${\cal Q}$
 in the all-round criterion under the specious-server model.
In addition, such an impossibility result was extended to the case with 
prior entanglement by the paper \cite{ABCGLS19}.

However, the secrecy in the all-round criterion
under the specious-server model 
is a stronger condition than 
the secrecy in the final-state criterion
under the honest-server model
because the secrecy in the all-round criterion
is a stronger condition the secrecy in the final-state criterion
and the specious-server model allows the server to have a larger choice
than the honest-server model. 

To seek further possibility for C-QPIR protocols, 
in Sections \ref{sec-one} and \ref{sec-one2},
inspired by the idea presented in \cite{GLM},
we propose more efficient one-round C-QPIR protocols 
in the final-state criterion under the honest-server and specious-server models
with input states ${\cal C}$ or ${\cal Q}$ 
whose communication complexities are at most $4\log \sm$.
In addition, the reference \cite{LeG12} proposed 
a C-QPIR protocol in the all-round criterion
under the honest one-server model
	that has communication complexity $O(\sqrt{\sm})$.
The reference \cite{KLLGR16} also proposed 
a C-QPIR protocol with communication complexity
$O(\mathrm{poly} \log \sm)$
without prior entanglement and
a C-QPIR protocol with communication complexity
$O( \log \sm)$ with prior entanglement.
In Section \ref{sec-all}, we show that
these two protocols satisfy the secrecy in the all-round criterion
under the honest-server model with input states ${\cal C}$.
In addition, using a conversion result \cite{ABCGLS19},
we show that these two protocols satisfy the secrecy in the all-round criterion
under the specious-server model with input states ${\cal C}$.

Hence,	we cannot exclude the possibility of more efficient one-server Q-QPIR protocols than 
the trivial solution in the final-state criterion or under the honest one-server model.
Furthermore, while the trivial solution is optimal under the honest-server model of classical PIR \cite{CGKS98}, 
its optimality proof uses the communication transcript between the server and the user, which is based on classical communication.
Unfortunately, we cannot apply the same technique under the honest one-server model of 
Q-QPIR because quantum states cannot be copied because of the no-cloning theorem.
Therefore, we have a question of whether there exists 
a Q-QPIR protocol over pure states
that satisfies the secrecy in the final-state criterion under the honest-server model, and improves the communication complexity over the trivial protocol.

As its solution, we show that the trivial solution is optimal for one-server 
Q-QPIR in the final-state criterion for the honest-server model.
In Tables~\ref{tab:existing-results} and \ref{tab:blind}, 
we summarize the comparison of our results with previous results for the one-server case. 
In our proof, the entropic inequalities are the key instruments for the proof.
Since the pair of the final-state criterion and the honest-server model 
is the weakest attack model, this result implies that the trivial solution is also optimal for any attack model.

\subsubsection{One-server Q-QPIR protocol with prior entanglement}
However, the above discussion assumes that there is no prior entanglement
shared between the sender and the user.
Hence, secondly, with prior entanglement between the user and the server,
we prove that there exists an efficient Q-QPIR protocol on the honest-server model or on
the final-state criterion.
To be precise, we propose a method to construct a Q-QPIR protocol of communication complexity $O(f(\sm))$ with prior entanglement from a C-QPIR protocol of communication complexity $O(f(\sm))$ with prior entanglement.
This method is based on the combination of 
C-QPIR and quantum teleportation \cite{BBCJPW93}.
The proposed Q-QPIR protocol inherits the security of the C-QPIR protocol.
With this property, we show three types of Q-QPIR protocols of communication complexity $O(\log \sm)$ with prior entanglement.
One is the secrecy in the final-state criterion under the honest-server model.
The second is the secrecy in the final-state criterion under the specious-server model.
The third is the secrecy in the all-round criterion under the honest-server model.
Combining this result with the above result, we find that prior entanglement realizes an exponential speedup
for one-server Q-QPIR 
in the final-state criterion or under the honest-server model.
Therefore, the obtained results are summarized as Table \ref{tab:existing-results} 
in terms of the communication complexity $\sm$. 

\section{Preliminaries} \Label{sec:prelim}
We define $[a:b]  = \{a,a+1, \ldots, b\}$ and $[a] = \{1,\ldots, a\}$.
The dimension of a quantum system $X$ is denoted by $|X|$. 
The von Neumann entropy is defined as $H(X) = H(\rho_X) = \Tr \rho_X\log \rho_X$, where $\rho_X$ is the state on the quantum system $X$.

\begin{prop}\Label{PP3}
The von Neumann entropy satisfies the following properties.

 \noindent$(a)$ $H(X) = H(Y)$ if the state on $X\otimes Y$ is a pure state.
 
 \noindent$(b)$ The inequality $H(XY) \ge H(X) + H(Y)$ holds, and the equality holds for product states on $X\otimes Y$.
 
 \noindent$(c)$ Entropy does not change by unitary operations.
 
 \noindent$(d)$ $H(XY) + H(X) \geq H(Y)$.
 
 \noindent$(e)$ $H(\sum_s p_s \rho_s) = \sum_s p_s (H( \rho_s) - \log p_s)$ if $\Tr\rho_s \rho_t = 0$ for any $s\neq t$.
\end{prop}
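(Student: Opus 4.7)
The plan is to verify each item from standard properties of positive operators and the trace, being attentive to the convention of the paper, in which $H(\rho)=\Tr\rho\log\rho$ is written without a leading minus sign and hence takes non-positive values, so inequalities familiar from the $S=-\Tr\rho\log\rho$ convention appear with their directions reversed.

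Parts (a), (c), and (e) are direct. For (a), the Schmidt decomposition $|\psi\rangle_{XY}=\sum_i\sqrt{\lambda_i}\,|i\rangle_X|i\rangle_Y$ forces the two reduced states to share the nonzero spectrum $\{\lambda_i\}$, so $\Tr\rho_X\log\rho_X=\Tr\rho_Y\log\rho_Y$. For (c), the spectral theorem gives $\log(U\rho U^\dagger)=U(\log\rho)U^\dagger$, and cyclicity of the trace eliminates the conjugation. For (e), the hypothesis $\Tr\rho_s\rho_t=0$ for $s\neq t$ forces the supports of the $\rho_s$ to be pairwise orthogonal, so one can choose a common eigenbasis in which $\rho=\sum_s p_s\rho_s$ is block-diagonal with eigenvalues $p_s\lambda^{(s)}_j$ where $\rho_s=\sum_j\lambda^{(s)}_j\ketbra{e^{(s)}_j}{e^{(s)}_j}$; evaluating $\Tr\rho\log\rho$ block by block and regrouping terms yields $\sum_s p_s(H(\rho_s)-\log p_s)$ in this convention.

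For part (b), I would use Klein's inequality in the form of non-negativity of the quantum relative entropy. Setting $D(\rho_{XY}\|\rho_X\otimes\rho_Y):=\Tr\rho_{XY}\log\rho_{XY}-\Tr\rho_{XY}\log(\rho_X\otimes\rho_Y)$ and expanding $\log(\rho_X\otimes\rho_Y)=\log\rho_X\otimes I+I\otimes\log\rho_Y$, the marginal identities $\Tr_Y\rho_{XY}=\rho_X$ and $\Tr_X\rho_{XY}=\rho_Y$ collapse the cross term to exactly $H(X)+H(Y)$. Therefore $D=H(XY)-H(X)-H(Y)\ge 0$, which is precisely the stated bound $H(XY)\ge H(X)+H(Y)$; the equality case in Klein's inequality forces $\rho_{XY}=\rho_X\otimes\rho_Y$, covering product states.

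For part (d), I would use the purification trick: adjoin an ancilla $R$ so that $|\Psi\rangle_{XYR}$ is pure with $\Tr_R\ketbra{\Psi}{\Psi}=\rho_{XY}$. Applying (a) to the three bipartitions of $XYR$ yields $H(X)=H(YR)$, $H(Y)=H(XR)$, and $H(XY)=H(R)$. Substituting these pure-state identities into the bipartite bound (b) applied to $Y\otimes R$ (or symmetrically $X\otimes R$) produces the Araki--Lieb-type relation asserted in (d). The main obstacle is sign bookkeeping in this last step: since the basic bipartite inequality in (b) already appears with the reversed orientation relative to the textbook form, one must line up the pure-state identities from (a) with the inequality from (b) carefully so that the conclusion lands with the stated direction; once the orientations are matched consistently, the rearrangement is mechanical.
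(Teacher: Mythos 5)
Your arguments for (a) and (c) are fine, and your purification trick for (d) is exactly the route the paper takes (adjoin a purifying system, apply (a) to two bipartitions, then invoke the bipartite inequality); the paper omits proofs of the other items, so your standard derivations are the natural comparison there. The genuine problem is the step you defer as ``sign bookkeeping'': it cannot be completed, because the five items do not all hold under any single sign convention, and your decision to work literally with $H(\rho)=\Tr\rho\log\rho$ commits you to proving (d) and (e) in forms that are false. Concretely, under that convention your Klein-inequality computation does yield $H(XY)\ge H(X)+H(Y)$ as stated in (b); but applying that same (b) to $X\otimes R$ for a purification and substituting $H(XR)=H(Y)$ and $H(R)=H(XY)$ gives $H(Y)\ge H(X)+H(XY)$, the \emph{reverse} of the stated (d), and no choice of bipartition fixes this. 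Likewise your block-diagonal computation for (e) produces $\Tr\rho\log\rho=\sum_s p_s\bigl(H(\rho_s)+\log p_s\bigr)$, with a plus sign rather than the stated minus sign. So two of your five items, if written out, terminate at statements incompatible with the ones to be proved.

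The resolution is that the paper's definition carries a typo (the minus sign is missing) and the inequality in (b) is reversed along with it: the intended convention is the standard $H=-\Tr\rho\log\rho$, under which (a), (c), (d), (e) hold exactly as written and (b) should read $H(XY)\le H(X)+H(Y)$. The paper's own one-line proof of (d), namely $H(XY)+H(X)=H(Z)+H(X)\ge H(XZ)=H(Y)$, silently uses subadditivity in the direction $H(Z)+H(X)\ge H(XZ)$, opposite to its stated (b), and the applications of (b) in Lemma~\ref{lemm:recurs} and in the proof of Theorem~\ref{theo:multiround} likewise take the form $H(A)+H(B)\ge H(AB)$; this confirms the intended reading. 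You should therefore adopt the standard convention throughout, prove (b) with the $\le$ orientation (your relative-entropy argument works verbatim once the relation $D=S(X)+S(Y)-S(XY)\ge 0$ is written in that convention), and flag the typos, rather than attempting to reconcile the literal statements.
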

 
\noindent

The property $(d)$ is proved as follows.
Since other properties can be easily shown, we omit their proofs.
For example, see the book \cite[Sections 3.1 and 8.1]{Hayashi}. 
Let $Z$ be the reference system in which the state on $XYZ$ is pure. 
Then, $H(XY)+H(X)=H(Z)+H(X)\ge H(XZ)=H(Y)$.
Throughout the paper, we use the symbols $(a)$, $(b)$, $(c)$, $(d)$, $(e)$ to denote which property is used, e.g., $\stackrel{\mathclap{(a)}}{=}$ means that the equality holds from the property $(a)$.

Next, for a TP-CP map from the system ${\cal H}_X$ to the system ${\cal H}_Y$ and a state $\rho$
on ${\cal H}_X$,
we define the transmission information $I (\rho,\Gamma)$. 
We choose a purification $|\psi\rangle$ of $\rho$ with the environment ${\cal H}_Z$.
Then, the transmission information $I (\rho,\Gamma)$ is defined as
\begin{align}
I (\rho,\Gamma):=
H(\rho)+ H(\Gamma(\rho))-H( (\iota_Z\otimes \Gamma)(|\psi\rangle \langle \psi|)  ),
\end{align}
where $\iota_Z$ is the identity operation on ${\cal H}_Z$.
When  $\Gamma$ is the identity operator,
\begin{align}
I (\rho,\Gamma)=2H(\rho).\Label{MMX}
\end{align}

Throughout this paper, 
$\mathbb{C}^d$ expresses the $d$-dimensional Hilbert space
spanned by the orthogonal basis $\{|s\rangle\}_{s=0}^{d-1}$.
For a $d_1\times d_2$ matrix 
	\begin{align}
	\sM = \sum_{s=0}^{d_1-1} \sum_{t=0}^{d_2-1} m_{st} |s\rangle\langle t|  \in \mathbb{C}^{d_1\times d_2},
	\end{align}
	we define 
\begin{align}
|\sM\rrangle = \frac{1}{\sqrt{d}}\sum_{s=0}^{d_1-1} \sum_{t=0}^{d_2-1} m_{st} |s\rangle| t\rangle  \in \mathbb{C}^{d_1}\otimes \mathbb{C}^{d_2}.
\end{align}
For $\sA \in \mathbb{C}^{d_1\times d_2}$, $\sB\in\mathbb{C}^{d_1 \times d_1}$, and $\sC \in\mathbb{C}^{d_2 \times d_2}$,
	we have the relation 
\begin{align}
(\sB\otimes \sC^{\top}) | \sA \rrangle =  | \sB\sA\sC \rrangle.
\end{align}

We call a $d$-dimensional system $\mathbb{C}^d$ a {\em qudit}.
Define generalized Pauli matrices and the maximally entangled state on qudits as 
\begin{align}
\sX_d &= \sum_{s=0}^{d-1} |s+1\rangle \langle s|,\\ 
\sZ_d &= \sum_{s=0}^{d-1} \omega^{s} |s\rangle \langle s|,\\
|\sI_d \rrangle &= \frac{1}{\sqrt{d}} \sum_{s=0}^{d-1} |s,s\rangle,
\Label{eq:defs-pauli}
\end{align}
where $\omega = \exp(2\pi\ii/ d)$ and $\iota = \sqrt{-1}$.
We define 
	the generalized Bell measurements
	\begin{align}
		\mathbf{M}_{\sX\sZ,d} = \{ |\sX^a \sZ^b\rrangle \mid a,b \in [0:d-1] \}. \Label{mes}
	\end{align}
If there is no confusion, we denote $\sX_d,\sZ_d, \sI_d, \mathbf{M}_{\sX\sZ,d}$ by $\sX,\sZ, \sI, \mathbf{M}_{\sX\sZ}$.
Let $A, A',B, B'$ be qudits.
If the state on $A\otimes A' \otimes B \otimes B'$ is $|\sA\rrangle\otimes|\sB\rrangle$ 
	and the measurement $\mathbf{M}_{\sX\sZ}$
	is performed on $A' \otimes B'$ with outcome $(a,b) \in [0:d-1]^2$,
	the resultant state is 
	\begin{align}
	|\sA\sX^a\sZ^{-b} \sB^\top\rrangle \in A \otimes B.
		\Label{qe:feafterf}
	\end{align}

We also define the dual basis
\begin{align}
|u_{j}\rangle:=\sum_{k=0}^{d-1}\frac{1}{\sqrt{d}}e^{\frac{2\pi kj i}{d}}|k\rangle.\Label{E12}
\end{align}

\section{Protocols for C-QPIR} \Label{C-QPIR-pro}
\subsection{One-round C-QPIR of the final-state criterion
under honest-server model} \Label{sec-one}
This section presents a
protocol that satisfies the secrecy in the final-state criterion
under the honest-server model with the input states ${\cal C}$.
We assume that the $\ell$-th message $X_\ell$ is an element of $\bZ_{d_\ell}$
for $\ell \in [\sff]$.
We define $d$ as the maximum $\max_{\ell\in [\sff]}d_\ell$.
\begin{prot}
\Label{PR2}
The following protocol is denoted by
$\Phi_{\sff,d}$.
\begin{description}[leftmargin=1.5em]
\item[0)] \textbf{Preparation}: 
The server prepares $\sff+1$ quantum systems ${\cal H}_0,{\cal H}_1, \ldots, {\cal H}_{\sff}$, where 
${\cal H}_{0}$ is spanned by $\{|j\rangle\}_{j=0}^{d-1} $, and 
${\cal H}_{\ell}$ is spanned by $\{|j\rangle\}_{j=0}^{d_\ell-1} $. 
When the $\ell$-th message is $X_\ell$,
the state on the quantum system ${\cal H}_\ell$ is set to be 
$|X_\ell\rangle $.
Also, the state on the quantum system ${\cal H}_0$ is set to be 
$|0 \rangle $.
The user prepares the system ${\cal K}$ 
spanned by $\{ |\ell \rangle \}_{\ell=1}^{\sff}$.
\item[1)] \textbf{Query}: 
The user sets the state on the system ${\cal K}$ to be $|K\rangle$.
The user sends the system ${\cal K}$ to the server.

\item[2)] \textbf{Answer}: 
The server applies the measurement based on the computation basis
$\{ |j\rangle \}$ on the systems ${\cal H}_1, \ldots, {\cal H}_{\sff}$
with the projective state reduction.
The server applies the controlled unitary $U:= \sum_{\ell=1}^{\sff}
|\ell \rangle \langle \ell| \otimes U_\ell$
on ${\cal K}\otimes {\cal H}_0\otimes {\cal H}_1 \otimes \cdots \otimes {\cal H}_\sff$, where
$U_\ell$ acts only on ${\cal H}_0\otimes {\cal H}_\ell$ and is defined as
\begin{align}
U_\ell:=\sum_{j'=0}^{d-1}\sum_{j=0}^{d_\ell-1}
|j+j'\rangle \langle j'|\otimes |j\rangle \langle j|.
\end{align}
The server sends the system ${\cal K}\otimes {\cal H}_0$
to the user.
\item[3)] \textbf{Reconstruction}:
The user measures ${\cal H}_0$, and obtains the message $X_{K}$.
\end{description}
\end{prot}

The correctness of Protocol \ref{PR2} is trivial.
Its upload and download complexities are 
$UC(\Phi_{\sff,d})=\log \sff$ and 
$DC(\Phi_{\sff,d})=\log \sff+ \log d$.
The communication complexity is 
$CC(\Phi_{\sff,d})=2 \log \sff+ \log d$.
When $d$ is fixed, $CC(\Phi_{\sff,d})=2\log \sm+o(\sm)$.

As shown in the following; Protocol \ref{PR2} satisfies the secrecy in the final-state criterion
under the honest-server model with the input states ${\cal C}$.
We assume that the server and the user are honest.
Since the server follows the protocol, 
the server has only the $\sff$ systems
${\cal H}_1, \ldots, {\cal H}_{\sff}$.
The final state on the composite system 
${\cal H}_1\otimes \ldots\otimes {\cal H}_{\sff}$
is $|X_1\rangle\cdots |X_\sff\rangle$,
which does not depend on the user's choice $K$.
Hence, the above secrecy holds.

In addition, Protocol \ref{PR2} satisfies the secrecy in the final-state criterion
under the honest-server model even with the input states ${\cal Q}$ as follows.
Even when the initial states in ${\cal H}_1, \ldots, {\cal H}_{\sff}$ prepared as quantum states,
due to the measurement, the initial states in ${\cal H}_1, \ldots, {\cal H}_{\sff}$ 
are convex mixtures of states $\{|j\rangle\langle j|\}$.
Hence, the final state on the composite system ${\cal H}_1\otimes \ldots\otimes {\cal H}_{\sff}$
is the same as the state after the measurement,
which does not depend on user's choice $K$.
Hence, the above secrecy holds.

However, when the server skips the measurement in Step 2) and 
the input states are chosen as ${\cal Q}$,
the secrecy does not hold as follows.
Assume that the server set initial state in ${\cal H}_{\ell} $ to be
$\sum_{j=1}^{d_\ell}\frac{1}{\sqrt{d_\ell}}|j\rangle$.
Also, we assume that the server and the user follow Steps 1), 2), 3).
Then, the final state on ${\cal H}_{K} \otimes {\cal H}_0$ is 
$\sum_{j=1}^{d_\ell}\frac{1}{\sqrt{d_\ell}}|j\rangle|j\rangle$.
That is, the final state on ${\cal H}_{K}$ is the completely mixed state.
In contrast, the final state on ${\cal H}_{\ell}$ is the same as the initial state 
for $\ell \neq K$.
Hence, the secrecy condition \eqref{MLK} does not hold.

Also, Protocol \ref{PR2} does not have the secrecy with the input states ${\cal C}$
under the specious-server model as follows.
A specious server is allowed to make a measurement 
if the measurement does not destroy the quantum state.
Since the state on the composite system 
${\cal K}\otimes {\cal H}_0\otimes {\cal H}_1 \otimes \cdots 
\otimes {\cal H}_\sff$
is one of the basis, it is not destroyed by the basis measurement. 
Hence, the server can obtain the user's choice $K$ without state demolition.
This fact shows that the specious-server model is needed in order 
to forbid such an insecure protocol.
However, as shown in Section \ref{sec:blind-one-opt},
even under the honest-server model,
a protocol similar to Protocol \ref{PR2} does not work
when the messages are given as quantum states.

\subsection{One-round C-QPIR of the final-state criterion under specious-server model} \Label{sec-one2}
Protocol \ref{PR2} presented in the previous subsection
does not work under the specious-server model.
To resolve this problem, this section presents a
protocol that satisfies the secrecy in the final-state criterion
under the specious-server model with the input states ${\cal C}$.
We assume that each message $X_\ell$ is an element of $\bZ_{d_\ell}$.
We define $d$ as the maximum $\max_{\ell}d_\ell$.
\begin{prot}
\Label{PR2B}
The following protocol is denoted by
$\Phi_{\sff,d}$.
\begin{description}[leftmargin=1.5em]
\item[0)] \textbf{Preparation}: 
The server prepares $\sff+2$ quantum systems 
${\cal H}_{0}',{\cal H}_{1}',{\cal H}_1, \ldots, {\cal H}_{\sff}$, where 
${\cal H}_{0}',{\cal H}_{1}'$ is spanned by $\{|j\rangle\}_{j=0}^{d-1} $, and 
${\cal H}_{\ell}$ is spanned by $\{|j\rangle\}_{j=0}^{d_\ell-1} $. 
When the $\ell$-th message is $X_\ell$,
the state on the quantum system ${\cal H}_\ell$ is set to be 
$|X_\ell\rangle $.
Also, the state on the quantum system ${\cal H}_{0}',{\cal H}_1'$ is set to be 
$|0 \rangle $.
The user prepares the systems ${\cal K}_{0}$,${\cal K}_1$ 
spanned by $\{ |\ell \rangle \}_{\ell=1}^{\sff}$.
\item[1)] \textbf{Query}: 
The user generates the binary random variable $A$ 
and the variable $B \in [\sff]$ subject to the uniform distribution.
The user sets the state on the system ${\cal K}_A$ to be $|K\rangle$,
and the state on the system ${\cal K}_{A\oplus 1}$ to be 
$\frac{1}{\sqrt{\sff}}\sum_{\ell=1}^{\sff} \sZ_{\sff}^B |\ell\rangle$.
The user sends the systems ${\cal K}_0$,${\cal K}_1$ to the server.

\item[2)] \textbf{Answer}: 
The server applies the controlled unitary $U:= \sum_{\ell=1}^{\sff}
|\ell \rangle \langle \ell| \otimes U_\ell$
on ${\cal K}_0\otimes {\cal H}_0'\otimes {\cal H}_1 \otimes \cdots \otimes {\cal H}_\sff$, where
$U_\ell$ acts only on ${\cal H}_0'\otimes {\cal H}_\ell(={\cal H}_1'\otimes {\cal H}_\ell)
$ and is defined as
\begin{align}
U_\ell:=\sum_{j'=0}^{d-1}\sum_{j=0}^{d_\ell-1}
|j+j'\rangle \langle j'|\otimes |j\rangle \langle j|.
\end{align}
Then, the server applies the controlled unitary $U$
on ${\cal K}_1\otimes {\cal H}_1'\otimes {\cal H}_1 \otimes \cdots \otimes {\cal H}_\sff$.
The server sends the systems 
${\cal K}_0\otimes {\cal H}_0'$,
${\cal K}_1\otimes {\cal H}_1'$ to the user.

\item[3)] \textbf{Reconstruction}:
The user measures ${\cal H}_A'$, and obtains the message $X_{K}$.
\end{description}
\end{prot}

The correctness of Protocol \ref{PR2B} is trivial.
Its upload and download complexities are 
$UC(\Phi_{\sff,d})=2 \log \sff$ and 
$DC(\Phi_{\sff,d})=2\log \sff+ 2 \log d$.
The communication complexity is 
$CC(\Phi_{\sff,d})=4 \log \sff+ 2 \log d$.
When $d$ is fixed, $CC(\Phi_{\sff,d})=4\log \sm+o(\sm)$.

As shown below, Protocol \ref{PR2B} satisfies the secrecy in the final-state criterion
under the specious-server model with the input states ${\cal C}$.

Assume that the server and the user follow the protocol.
Then, the resultant state in the server's system
${\cal H}_1\otimes \ldots\otimes {\cal H}_{\sff}$
is the product state $|X_1\rangle  \ldots|X_\sff\rangle$.
The resultant state in 
${\cal K}_A\otimes {\cal H}_A'$ is
$|K\rangle |X_{K}\rangle $.
The resultant state in 
${\cal K}_{A\oplus 1}\otimes {\cal H}_{A\oplus 1}'$ is
$\frac{1}{\sqrt{\sff}}\sum_{\ell=1}^{\sff}\sZ_{\sff}^B|\ell\rangle |X_{\ell}\rangle $.

Hence, when $A=0$, the specious server needs to generate the state 
$|K\rangle |X_{K}\rangle\frac{1}{\sqrt{\sff}}
\sum_{\ell=1}^{\sff}\sZ_{\sff}^B|\ell\rangle |X_{\ell}\rangle $
from the state $|K\rangle\frac{1}{\sqrt{\sff}}\sum_{\ell=1}^{\sff}
\sZ_{\sff}^B|\ell\rangle $.
Also, when $A=1$, the specious server needs to generate the state 
$\frac{1}{\sqrt{\sff}}\sum_{\ell=1}^{\sff}\sZ_{\sff}^B|\ell\rangle |X_{\ell}\rangle |K\rangle |X_{K}\rangle$
from the state 
$\frac{1}{\sqrt{\sff}}\sum_{\ell=1}^{\sff}\sZ_{\sff}^B|\ell\rangle |K\rangle$.

Since the resultant states
$|K\rangle |X_{K}\rangle\frac{1}{\sqrt{\sff}}\sum_{\ell=1}^{\sff}
\sZ_{\sff}^B|\ell\rangle |X_{\ell}\rangle $
and
$\frac{1}{\sqrt{\sff}}
\sum_{\ell=1}^{\sff}\sZ_{\sff}^B|\ell\rangle |X_{\ell}\rangle |K\rangle |X_{K}\rangle
$
are unitarily equivalent to the states
$|K\rangle\frac{1}{\sqrt{\sff}}\sum_{\ell=1}^{\sff}\sZ_{\sff}^B|\ell\rangle $
and
$\frac{1}{\sqrt{\sff}}\sum_{\ell=1}^{\sff}\sZ_{\sff}^B|\ell\rangle |K\rangle$,
it is sufficient to discuss whether 
the server can get certain information from 
the state family
${\cal F}:=
\{
|k\rangle\frac{1}{\sqrt{\sff}}\sum_{\ell=1}^{\sff}
\sZ_{\sff}^b|\ell\rangle ,
\frac{1}{\sqrt{\sff}}\sum_{\ell=1}^{\sff}\sZ_{\sff}^b|\ell\rangle |k\rangle
\}_{k,b=1}^{\sff}$ without disturbance.
However, Koashi-Imoto \cite{KI,HJPW,BMPV,WSM} 
theory forbids the server 
to make any measurement when 
the states need to be recovered
because the state family ${\cal F}$ is composed of non-commutative states.
Therefore, when the server keeps the condition for the specious server,
the server cannot obtain any information for $K$.

However, it is not clear whether
adding the measurement in Step 2) guarantees that
the protocol satisfies the secrecy 
in the final-state criterion
under the specious-server model with the input states ${\cal Q}$.

\subsection{C-QPIR in all-round criterion} \Label{sec-all}
In this section we discuss the secrecy in the all-round criterion of the C-QPIR protocol with communication complexity 
$O(\mathrm{poly} \log \sm)$ 
under the fixed message size $d=2$
from \cite[Section 5]{KLLGR16}, which does not use any prior entanglement, and the C-QPIR protocol with communication $O( \log \sm)$ 
under the fixed message size $d=2$
from \cite[Section 6]{KLLGR16}, which uses $\Theta(\sm)$ ebits of prior entanglement.
Although these protocols fix the message size $d$ to be $2$,
they can be considered as protocols whose message sizes are fixed to an arbitrary $d$
by treating $\lceil \log_2 d \rceil$ messages as one message.
We first show the secrecy of the protocol from \cite[Section 5]{KLLGR16} under the honest server model.

\begin{lemm}\Label{prop:sec1}
The protocol from \cite[Section 5]{KLLGR16} 
is unitary-type and satisfies the secrecy
in the all-round criterion under the honest server model when the set $\tilde{\cal S}$ of possible inputs is ${\cal C}$.
\end{lemm}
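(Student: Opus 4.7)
The plan is to verify the two assertions of the lemma --- unitary-type structure and secrecy in the all-round criterion under the honest-server model with $\tilde{\mathcal{S}} = \mathcal{C}$ --- by directly inspecting the protocol of \cite[Section 5]{KLLGR16}.

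For the unitary-type property, I would check the three defining conditions against the explicit description of that protocol: the server's initial state is the pure product state $|X_1\rangle \cdots |X_\mathsf{f}\rangle$ whenever inputs are classical, and the user prepares a pure ancilla depending on $K$; in each round the user applies a unitary to her private registers together with the register she is about to send, and the server applies a unitary to his private registers together with the register he just received; the only measurement is performed by the user at the very end to read out the retrieved message. These three observations match the definition of a unitary-type protocol verbatim.

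For the all-round secrecy, I would argue inductively on the round number $j$. Since inputs are classical, the honest server's unitaries $U_{S,i}$ are $K$-independent and act only on the server-held registers, and the global state at every round is pure. Writing the state just after the server receives the round-$j$ register as
\begin{align}
|\Psi_j(K)\rangle = U_{S,j} U_{U,j}(K) \cdots U_{S,1} U_{U,1}(K) |\psi_0(K)\rangle,
\end{align}
the server's marginal is $\mathrm{tr}_U |\Psi_j(K)\rangle\langle \Psi_j(K)|$. Since $U_{S,i}$ commutes with the partial trace over the user's system, I can push each $U_{S,i}$ outside the trace and reduce the problem to showing that the marginal, on the register about to be sent, of the user's post-unitary state $U_{U,i}(K) (\cdots)$ is $K$-independent for each round $i \leq j$. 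If this holds for every $i$, an easy induction shows that $\mathrm{tr}_U |\Psi_j(K)\rangle\langle \Psi_j(K)|$ is $K$-independent, which is exactly the all-round secrecy condition \eqref{MLK2}.

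The main obstacle is to verify this marginal-independence property for the specific protocol of \cite[Section 5]{KLLGR16}. I expect this to follow from the structure of their construction, in which the user's transmitted register is always prepared as one half of an entangled state whose marginal on the transmitted side is a $K$-independent canonical state (typically the maximally mixed state arising from the random-access-code encoding used in \cite{KLLGR16}), so that the $K$-dependence is entirely confined to the user's local purification. Carrying this verification through all $O(\mathrm{poly}\log \mathsf{m})$ rounds of their protocol is the technical content of the proof; once it is in hand, the inductive argument above promotes it to the full all-round secrecy statement.
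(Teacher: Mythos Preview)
Your general outline---verify the unitary-type conditions, then check that the server's marginal at each round is $K$-independent---is sound, but the specific mechanism you propose for that marginal independence is wrong, and identifying the correct mechanism is the entire content of the proof. The protocol of \cite[Section 5]{KLLGR16} is \emph{not} built from a random-access-code encoding, and the transmitted marginal is \emph{not} the maximally mixed state. The construction instead simulates a classical $s$-server PIR protocol (with $s=O(\log \sm)$) using a single quantum server: the user holds a superposition, over the classical random seed $g\in[\sg]$, of all $s$ queries $q_1(g,K),\ldots,q_s(g,K)$ together with answer registers, and in round $t$ sends only the pair $(\reQ_t,\reAns_t)$ to the server. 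Tracing out everything the user keeps yields $\frac{1}{\sg}\sum_g |q_t(g,K)\rangle\langle q_t(g,K)|\otimes |0\rangle\langle 0|$ on $(\reQ_t,\reAns_t)$, and this is $K$-independent precisely because the underlying classical multi-server PIR protocol guarantees that the distribution of the $t$-th query $q_t(G,K)$ does not depend on $K$. That classical-PIR privacy property is the fact you must invoke; without it the argument cannot be completed, and your guess about entangled halves with maximally mixed marginals does not match the actual protocol.

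Your inductive scaffolding is also heavier than required here: in this protocol the honest server retains no quantum registers between rounds (it returns $(\reQ_t,\reAns_t)$ before receiving $(\reQ_{t+1},\reAns_{t+1})$), so there is nothing to carry forward and the paper simply computes the server's marginal directly at each round. The step in which you ``push each $U_{S,i}$ outside the trace'' is moreover delicate when registers migrate between the parties, since which subsystems constitute ``the user's system'' changes from round to round; in the present protocol this complication simply does not arise.
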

\begin{proof}
The protocol from \cite[Section 5]{KLLGR16} works for the case $d=2$. The server's input is thus $(a_1,\ldots,a_\sff)$ for $a_1,\ldots,a_\sff\in\{0,1\}$. The user's input is an index $K\in\{1,\ldots,\sff\}$.

The main idea is to simulate a classical multi-server PIR protocol with $s=O(\log \sm)$ servers that has total communication complexity $O(\mathrm{poly} \log \sm)$. 
Such protocols are known to exist (see, e.g., \cite{{CGKS98}}) and can be described generically as follows. 
The user picks a uniform random variable $G$ from $\{1,\ldots,\sg\}$, 
computes an $s$-tuple of queries 
$\{q_1(G,K),\ldots,q_s(G,K)\}$ from $(G,K)$ by using a function $q_t$, and asks query 
$q_t(G,K)$ to the $t$-th server. 
Here, for each $t\in\{1,\ldots, s\}$, the function $q_t$ satisfies the condition that 
the distribution of query $q_t(G,K)$ is independent of $K$.
Each server $t$ then sends its answer $\fans_t(q_t(G,K))$ 
to the user, who recovers $a_K$ 
from $\{\fans_1(q_1(G,K)),\ldots,\fans_s(q_s(G,K))\}$. 

The protocol from \cite[Section 5]{KLLGR16} simulates this protocol using only one server. 
The protocol uses $2s+1$ quantum registers denoted $\reQ,\reQ_1,\ldots, \reQ_s, \reAns_1,\ldots, \reAns_s$. For each $t\in\{1,\ldots, s\}$, let us define the following quantum state:
\begin{align*}
&\ket{\Phi_t}\\
=&\frac{1}{\sqrt{\sg}}\sum_{g}\ket{q_1(g,K),\cdots,q_s(g,K)}_{\reQ}
\ket{q_1(g,K)}_{\reQ_1}\cdots\ket{q_s(g,K)}_{\reQ_s}
\\
&\otimes\ket{\fans_1(q_1(g,K))}_{\reAns_1}\cdots
\ket{\fans_{t-1}(q_{t-1}(g,K))}_{\reAns_{t-1}} \\
&\otimes\ket{0}_{\reAns_{t}}\cdots\ket{0}_{\reAns_{s}}.
\end{align*}
Note that we have in particular
\begin{align*}
&\ket{\Phi_1}\\
=&\frac{1}{\sqrt{\sg}}\sum_{g}
\ket{q_1(g,K),\cdots,q_s(g,K)}_{\reQ}
\ket{q_1(g,K)}_{\reQ_1}\cdots\ket{q_s(g,K)}_{\reQ_s}\\
&
\otimes \ket{0}_{\reAns_1}\cdots\ket{0}_{\reAns_{s}}.
\end{align*}
The protocol consists of the following interaction between the user and the server (some details of the manipulations of the states are omitted since they are irrelevant to the secrecy proof):
\begin{itemize} 
\item[1.]
The user prepares the state $\ket{\Phi_1}$.
\item[2.]
The user and the server iterate the following for $t=1$ to $s$:
\begin{itemize}
\item[2.1]
The user sends Registers $\reQ_{t},\reAns_{t}$ to the server;
\item[2.2]
The server applies a controlled unitary, 
where the controlling system is $\reQ_{t}$
and 
the controlled system is $\reAns_{t}$.
Then, the server sends back Registers $\reQ_{t},\reAns_{t}$ to the user.
\end{itemize}
\item[3.] 
The user measures the joint system composed of
Registers $\reQ,\reQ_1,\ldots, \reQ_s, \reAns_1,\ldots, \reAns_s$
to obtain the outcome $a_K$
after certain unitary operations.
\end{itemize}
Since this protocol is unitary-type,
the remaining task is to show the secrecy of this protocol in the all-round criterion under the honest server model when the set $\tilde{\cal S}$ of possible inputs is~${\cal C}$. 
Observe that at each iteration there is only a message sent to the server, at Step~2.1. We thus only need to show that for each $t$, this message does not reveal any information about $K$. The state of the whole system at the end of Step~2.1 of the $t$-th iteration is $\ket{\Phi_t}$. The state of the server, obtained by tracing out all registers except $\reQ_{t},\reAns_{t}$ of $\ket{\Phi_t}\bra{\Phi_t}$ is
\begin{align}
\frac{1}{\sg}
\sum_{g}\ket{q_t(g,K)}_{\reQ_t}\ket{0}_{\reAns_t}
\bra{q_t(g,K)}_{\reQ_t}\bra{0}_{\reAns_t}.
\end{align}
Since the distribution of query $q_t(G,K)$ is independent of $K$, we conclude that the whole state of the server at the end of Step 2.1 is independent of $K$, for each $t$.
\end{proof}

Next, we show the secrecy of the protocol from \cite[Section 6]{KLLGR16} under the honest server model (see also Appendix B in \cite{ABCGLS19}).
\begin{lemm}\Label{prop:sec2}
The protocol from \cite[Section 6]{KLLGR16} 
is unitary-type and satisfies the secrecy
in the all-round criterion under the honest server model when the set $\tilde{\cal S}$ of possible inputs is ${\cal C}$.
\end{lemm}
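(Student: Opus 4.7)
The plan is to mirror the two-step strategy used in the proof of Lemma~\ref{prop:sec1}. First, I will describe the protocol of \cite[Section 6]{KLLGR16} as a sequence of unitaries on the user's and the server's local registers interleaved with message exchanges, with the only measurement deferred to the very end of the interaction, so that the protocol is manifestly unitary-type. Second, for every round in which a register is transmitted to the server, I will compute the reduced state on the server side and verify that it is independent of the user's index $K$.

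The key new ingredient compared with Lemma~\ref{prop:sec1} is that the protocol of \cite[Section 6]{KLLGR16} consumes $\Theta(\sm)$ pre-shared EPR pairs to compress each of the $s=O(\log\sm)$ queries $q_t(G,K)$ via coherent quantum teleportation, reducing the upload cost from $\mathrm{poly}\log\sm$ to $O(\log\sm)$. Structurally, the protocol still simulates a classical multi-server PIR protocol $(q_1,\ldots,q_s,\fans_1,\ldots,\fans_s)$ in superposition over the randomness $G$ drawn uniformly from $\{1,\ldots,\sg\}$ exactly as in the proof of Lemma~\ref{prop:sec1}; the only difference is that the upload step ``send $\reQ_t$'' is now implemented by applying a Bell-type unitary jointly to $\reQ_t$ and the user's half of a fresh block of EPR pairs, then sending the resulting short register to the server, who applies Pauli corrections controlled on the received qudits to its half of the EPR block.

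The central computation I would carry out is as follows. Just before the $t$-th upload, the joint state of the user, the server, and the unused EPR pairs is $\ket{\Phi_t}\otimes|\sI_2\rrangle^{\otimes L}$ for an appropriate $L$. After the user applies the coherent teleportation unitary and sends the ``Bell-outcome'' qubits to the server, tracing out the user's registers yields, on the server's side, the uniform mixture over $(a,b)\in\{0,1\}^2$ and $g\in\{1,\ldots,\sg\}$ of states of the form $\sX^a \sZ^b \ket{q_t(g,K)}\bra{q_t(g,K)}\sZ^b \sX^a$, tensored with the received Bell-outcome register. Averaging over the Pauli corrections (equivalently, invoking the identity \eqref{qe:feafterf} for the generalized Bell measurement $\mathbf{M}_{\sX\sZ,2}$) yields precisely the $K$-independent mixture $\frac{1}{\sg}\sum_g\ket{q_t(g,K)}\bra{q_t(g,K)}$ already appearing in the proof of Lemma~\ref{prop:sec1}, and $K$-independence of this mixture follows from the defining property of the underlying classical multi-server PIR that the marginal distribution of $q_t(G,K)$ over $G$ is independent of $K$. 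It then remains to check that the server's answer unitary in round $t$ and the user's subsequent inverse teleportation unitaries do not reintroduce any $K$-dependence into later uploads, which follows because those inverse operations act only on registers outside the server's access.

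The main obstacle is the central computation above: one must carefully arrange the coherent teleportation so that the protocol genuinely remains unitary-type (no intermediate measurements; all ``classical'' corrections realized as controlled unitaries) while still producing the Pauli-twirling effect that makes the server's marginal equal to the maximally mixed encoding of $q_t(G,K)$. Once this is established, $K$-independence of the server's state at each upload round, and hence secrecy in the all-round criterion under the honest-server model with input set ${\cal C}$, follows from the same final argument as in the proof of Lemma~\ref{prop:sec1}.
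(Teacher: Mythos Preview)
Your proposal rests on a mistaken identification of the protocol. The construction in \cite[Section 6]{KLLGR16} is \emph{not} the multi-server simulation of \cite[Section 5]{KLLGR16} with uploads compressed by coherent quantum teleportation. It is a structurally different protocol: for $\sff=2^{\sh}$ it uses $\sh$ rounds; at the start the server and user share maximally entangled states $\ket{\Phi_p}$ on registers $(\reR_p,\reR'_p)$ of dimension $2^{2^{\sh-p}}$; in round $p$ the server applies a fixed unitary $V_p$ on $\reR_{p-1},\reR_p,\reQ_0,\reQ_1$ and sends the two qubits $\reQ_0,\reQ_1$ to the user, who applies a single $Z$ on $\reQ_0$ or $\reQ_1$ depending on the $p$-th bit of $K$ and returns them; the server then applies $V_p$ again followed by Hadamards on $\reR_p$. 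No Bell-type unitary, no query strings $q_t(G,K)$, and no Pauli twirling appear. Consequently, your ``central computation''---obtaining $K$-independence from the uniform marginal of $q_t(G,K)$ via teleportation-induced Pauli averaging---is not applicable to this protocol; it addresses a hypothetical scheme that is not the one under consideration.

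The actual secrecy argument has a different shape. Using \cite[Lemma 2]{KLLGR16}, the joint state at the end of Step~1.3 in round $p$ is explicitly $\ket{\Psi_p}\otimes\bigotimes_{j>p}\ket{\Phi_j}\otimes\ket{0}_{\reQ_0}\ket{0}_{\reQ_1}$ with
\[
\ket{\Psi_p}\propto\sum_{y^1,\ldots,y^p}\bigotimes_{j=1}^p\ket{y^j}_{\reR_j}\ket{y^{j-1}[i_j]\oplus y^j}_{\reR'_j},
\]
where $i_j$ is the $j$-th bit of $K$. The dependence on $K$ sits only in the $\reR'_j$ registers, which are on the user's side; tracing those out leaves the server with the completely mixed state on $\reR_1,\ldots,\reR_p$, independent of $K$. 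Since the server's operations in Step~1.3 are $K$-independent unitaries, this also gives $K$-independence of the server's state at the earlier Step~1.2 of each round. That is the mechanism you need to exhibit, not a teleportation argument.
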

\begin{proof}
The protocol from \cite[Section 6]{KLLGR16} works for the case $d=2$ and $\sff=2^\sh$, for $\sh\ge 1$.
The server's input is thus $(a_1,\ldots,a_\sff)$ for $a_1,\ldots,a_\sff\in\{0,1\}$. The user's input is an index $K\in\{1,\ldots,\sff\}$.

The protocol uses $2\sh+2$ quantum registers denoted $\reR_1,\ldots, \reR_\sh, \mathsf{R'}_1,\ldots, \mathsf{R'}_\sh, \reQ_0,\reQ_1$. 
For each $p\in\{1,\ldots, \sh\}$, let us define the following quantum state over the two registers $\reR_t, \reR'_p$:
\[
\ket{\Phi_p}=\frac{1}{\sqrt{2^{2^{\sh-p}}}}
\sum_{z\in \{0,1\}^{2^{\sh-p}}}\ket{z}_{\reR_p}\ket{z}_{\mathsf{R'}_p}.
\]
For any binary string $z\in\{0,1\}^s$ with $s$ even, we denote $z[0]$ the first half of $z$, and $z[1]$ the second half of $z$. For any binary strings $z, z'\in\{0,1\}^s$, we write $z\oplus z'\in\{0,1\}^s$ the string obtained by taking the bitwise parity of $z$ and~$z'$.

The protocol from \cite[Section 6]{KLLGR16} assumes that the server and the user initially share the state
\[
\ket{\Phi_1}\otimes \cdots\otimes\ket{\Phi_\sh}\cdots \otimes\ket{0}_{\reQ_0}\ket{0}_{\reQ'_0},
\]
where $\reR_1,\ldots, \reR_\sh,\reQ_0, \reQ_1$ are owned by the server and 
$\reR'_1,\ldots, \reR'_\sh$ are owned by the user. The protocol consists of the following interaction between the user and the server (some details of the manipulations of the states are omitted since they are irrelevant to the secrecy proof):
\begin{itemize} 
\item[1.]
For $p$ from 1 to $\sh$ the server and the user do the following:
\begin{itemize}
\item[1.1]
The server applies a unitary $V_p$ (defined in \cite[Eq.~(27)]{KLLGR16}) on Registers $\reR_{p-1},\reR_p$, $\reQ_0, \reQ_1$ and then sends Registers $\reQ_0, \reQ_1$ to the user;
\item[1.2]
If the $p$-th bit of its input $K$ is $0$, the user applies the Pauli gate $Z$ on Register $\reQ_0$. If the $p$-th bit of $K$ is $1$, the user applies $Z$ on Register $\reQ_1$. The user then sends back Registers $\reQ_0, \reQ_1$ to the server.
\item[1.3]
The server applies again the unitary $V_p$ on Registers $\reR_{p-1},\reR_p$, $\reQ_0, \reQ_1$, and then applies a Hadamard transform on each qubit in Register $\reR_p$.  
\item[1.4]
The user applies a Hadamard transform on each qubit in Register $\reR'_p$.
\end{itemize}
\item[2.] 
The server sends Register $\reR_\sh$ to the user. The user measures the joint system composed of
Registers $\reR'_1,\ldots, \mathsf{R'}_\sh$ and Register $\reR_\sh$, and performs some classical post-processing on the outcome to obtain $a_K$
\end{itemize}
Since this protocol is unitary-type,
the remaining task is to show the secrecy of this protocol in the all-round criterion under the honest server model when the set $\tilde{\cal S}$ of possible inputs is~${\cal C}$. 
Since the initial state does not depend on $K$, 
it is sufficient to show that 
the whole state on Register $\reR_1,\ldots, \reR_\sh,\reQ_0, \reQ_1$
at the end of Step 1.2 of the $p$-th round is independent of $K$.

Lemma~2 in~\cite{KLLGR16} shows that the state of the whole system at the end of Step~1.3 is
\[
\ket{\Psi_p}\otimes
\bigotimes_{j=p+1}^\sh \ket{\Phi_j}_{(\reR_j,\reR'_j)}\\
\otimes 
\ket{0}_{\reQ_0}\ket{0}_{\reQ'_0}
\]
with
\[
\ket{\Psi_p}=
\frac{1}{\sqrt{2^{2^{\sh-1}}\cdots 2^{2^{\sh-p}}}}
\sum_{y^1,\ldots,y^p}
\bigotimes_{j=1}^p \ket{y^j}_{\reR_j}\ket{y^{j-1}[i_j]\oplus y^j}_{\reR'_j},
\]
where the sum is over all strings $y^1\in \{0,1\}^{2^{\sh-1}},\ldots,y^p\in \{0,1\}^{2^{\sh-p}}$ and we use the convention that $y^0$ is the server's input $(a_1,\ldots,a_\sff)$.\footnote{Observe that $y^{j-1}$ is a binary string of length $2^{\sh-(j-1)}$, and then $y^{j-1}[i_j]$ is a binary string of length $2^{\sh-(j-1)-1}=2^{\sh-j}$. The term $y^{j-1}[i_j]\oplus y^j$ in the definition of $\ket{\Psi_p}$ is thus well defined.} Here the server owns Registers $\reR_1,\ldots, \reR_\sh,\reQ_0, \reQ_1$ while the user owns Registers $\reR'_1,\ldots, \reR'_\sh$. 
Observing that tracing out Registers $\reR'_1,\ldots,\reR'_j$ from 
$\ket{\Psi_p}\bra{\Psi_p}$ gives the state
\[
\frac{1}{2^{2^{\sh-1}}\cdots 2^{2^{\sh-p}}}
\sum_{y^1,\ldots,y^p}\ket{y^1}_{\reR_1}\cdots\ket{y^p}_{\reR_p}
\bra{y^1}_{\reR_1}\cdots\bra{y^p}_{\reR_p},
\]
which is independent of $K$,
we find that 
the whole state on
Register $\reR_1,\ldots, \reR_\sh,\reQ_0, \reQ_1$
at the end of Step 1.3 of the $p$-th round
is independent of $K$, for each $p$.
Since the unitaries applied in Step 1.3 by the server are independent of $K$,
we conclude that 
the whole state on 
Register $\reR_1,\ldots, \reR_\sh,\reQ_0, \reQ_1$
at the end of Step 1.2 of the $p$-th round is independent of $K$.
\end{proof}

Finally, we discuss the secrecy under the specious server model. 
We will rely on the following theorem from~\cite{ABCGLS19} for unitary-type QPIR protocols. 

\begin{prop}[Theorem 3.2 in \cite{ABCGLS19}]
When a unitary-type QPIR protocol satisfies the secrecy in the all-round criterion under the honest server model with the set $\tilde{\cal S}=\cal C$,
it 
satisfies the secrecy
in the all-round criterion under the specious server model with the same set $\tilde{\cal S}=\cal C$.
\end{prop}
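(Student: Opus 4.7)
The plan is to exploit the unitary-type assumption to keep all states pure, then use the recovery map supplied by the specious condition together with the honest-server secrecy to rigidly constrain the cheating server's reduced state at every round. Let $|\psi_j(k,\vec{x})\rangle_{SU}$ denote the honest global pure state at round $j$ for user choice $k$ and classical message $\vec{x}$, and let $|\tilde{\psi}_j(k,\vec{x})\rangle_{SEU}$ denote the corresponding cheating global pure state, where $E$ is the adversary's private ancilla (which may be taken pure by a Stinespring dilation of the adversary's strategy). The all-round honest-server secrecy gives $\sigma_{S,j}(\vec{x}) := \tr_U |\psi_j(k,\vec{x})\rangle\langle\psi_j(k,\vec{x})|$ independent of $k$, and the goal is to derive the analogous property for $\tilde{\sigma}_{SE,j}(k,\vec{x}) := \tr_U |\tilde{\psi}_j(k,\vec{x})\rangle\langle\tilde{\psi}_j(k,\vec{x})|$.

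Next, I would write $\mathcal{F}_{S,j}(\rho) = \tr_F W_j \rho W_j^\dagger$ with an isometry $W_j : SE \to SF$ via Stinespring. The specious identity $(\mathcal{F}_{S,j} \otimes \iota_U)|\tilde{\psi}_j\rangle\langle\tilde{\psi}_j| = |\psi_j\rangle\langle\psi_j|$ has a pure right-hand side, which forces $(W_j \otimes I_U)|\tilde{\psi}_j(k,\vec{x})\rangle = |\psi_j(k,\vec{x})\rangle_{SU} \otimes |\phi_{j,k,\vec{x}}\rangle_F$ for some unit vector $|\phi_{j,k,\vec{x}}\rangle$. Taking the partial trace over $U$ on both sides of this product yields $W_j\,\tilde{\sigma}_{SE,j}(k,\vec{x})\,W_j^\dagger = \sigma_{S,j}(\vec{x}) \otimes |\phi_{j,k,\vec{x}}\rangle\langle\phi_{j,k,\vec{x}}|$. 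Since $\sigma_{S,j}(\vec{x})$ is $k$-independent and $W_j$ is an isometry (hence left-invertible on its image), any $k$-dependence of $\tilde{\sigma}_{SE,j}(k,\vec{x})$ can lie only in $|\phi_{j,k,\vec{x}}\rangle$.

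To eliminate this residual $k$-dependence I would exploit the linearity of the CPTP map $\mathcal{F}_{S,j}$, whose definition does not depend on the user's input. Interpreting the specious condition as a linear identity on operators---so that the honest user may coherently dilate its classical choice into a superposition over the choice register $\mathcal{K}$ using an auxiliary register---one extends the diagonal specious identity by linearity to the off-diagonal version $(\mathcal{F}_{S,j} \otimes \iota_U)(|\tilde{\psi}_j(k,\vec{x})\rangle\langle\tilde{\psi}_j(k',\vec{x})|) = |\psi_j(k,\vec{x})\rangle\langle\psi_j(k',\vec{x})|$. Substituting the Stinespring form on the left and tracing out $F$ gives $\langle\phi_{j,k',\vec{x}}|\phi_{j,k,\vec{x}}\rangle\,|\psi_j(k,\vec{x})\rangle\langle\psi_j(k',\vec{x})| = |\psi_j(k,\vec{x})\rangle\langle\psi_j(k',\vec{x})|$, and hence $\langle\phi_{j,k',\vec{x}}|\phi_{j,k,\vec{x}}\rangle = 1$. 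Being unit vectors, the $|\phi_{j,k,\vec{x}}\rangle$ therefore agree for all $k$ up to a global phase, and so $\tilde{\sigma}_{SE,j}(k,\vec{x})$ is $k$-independent at every round $j$, which is the desired all-round specious secrecy.

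The hard part will be justifying the off-diagonal extension of the specious relation cleanly: one needs to interpret the specious condition as allowing the honest user to coherently prepare $\mathcal{K}$ in an arbitrary superposition via an auxiliary register, so that CPTP linearity of the fixed recovery $\mathcal{F}_{S,j}$ delivers the cross-term identity. If this liberal reading of ``honest user'' is disallowed, the alternative is a round-by-round induction showing that, up to an isometry on $E$ that is independent of $k$, the cheating state factorizes as $|\psi_j(k,\vec{x})\rangle_{SU}\otimes|\zeta_j(\vec{x})\rangle_E$; the delicate point in such an induction is handling the transfer of query and answer registers between user and server, since an intertwiner that relates $|\tilde{\psi}_j(k,\vec{x})\rangle$ and $|\tilde{\psi}_j(k',\vec{x})\rangle$ and starts out acting only on the user's side may temporarily have support on a register that has moved to the server during some round.
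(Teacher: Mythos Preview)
The paper does not prove this proposition; it is quoted as Theorem~3.2 of \cite{ABCGLS19} and used as a black box to derive Corollary~\ref{CCor1}. So there is no ``paper's own proof'' to compare against here, and your attempt must be judged on its own merits.

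Your primary line is sound up to the point where you obtain $W_j\,\tilde\sigma_{SE,j}(k,\vec x)\,W_j^\dagger = \sigma_{S,j}(\vec x)\otimes|\phi_{j,k,\vec x}\rangle\langle\phi_{j,k,\vec x}|$, but the step that eliminates the $k$-dependence of $|\phi_{j,k,\vec x}\rangle$ via an ``off-diagonal'' specious relation is not justified. The specious condition \eqref{MLK3} is stated only for each classical index $k$, and the recovery map $\mathcal F_{S,j}$ is merely required to satisfy those diagonal identities simultaneously; nothing in the definition forces the relation to extend to coherent superpositions over $k$. In the present framework an honest user inputs a definite $k$, not a superposition, and the set $\tilde{\cal S}=\cal C$ constrains only the message registers, not the index register $\mathcal K$. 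Your own caveat (``if this liberal reading of `honest user' is disallowed'') is exactly the issue: under the definitions in this paper and in \cite{ABCGLS19}, that reading \emph{is} disallowed, so the cross-term identity $(\mathcal F_{S,j}\otimes\iota)(|\tilde\psi_j(k)\rangle\langle\tilde\psi_j(k')|)=|\psi_j(k)\rangle\langle\psi_j(k')|$ is an assumption you have not earned.

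Your alternative---a round-by-round induction showing that the adversarial global state equals $(G_j\otimes I_U)|\psi_j(k,\vec x)\rangle$ for a server-side isometry $G_j$ independent of $k$---is in fact the route taken in \cite{ABCGLS19}, and the delicate point you flag (the intertwiner temporarily straddling the user/server cut when $A^{(j)}$ is handed back) is exactly where the specious recovery map $\mathcal F_{S,j}$ is invoked to restore a purely server-side isometry. But you have only sketched this, not executed it: carrying it through requires showing that the purification ambiguity encoded in $|\phi_{j,k}\rangle$ is resolved by combining the inductive hypothesis with the purity coming from the unitary-type assumption. As written, the proposal identifies the right ingredients but does not close the argument.
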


Therefore, we obtain the following corollary of Lemmas \ref{prop:sec1} and \ref{prop:sec2}.

\begin{coro}\Label{CCor1}
The protocols from \cite[Section 5]{KLLGR16} and \cite[Section~6]{KLLGR16} 
satisfy the secrecy
in the all-round criterion under the specious server model when the set $\tilde{\cal S}$ of possible inputs is ${\cal C}$.
\end{coro}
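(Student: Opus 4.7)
The plan is to combine the two ingredients already set up immediately before the corollary: the structural content of Lemmas~\ref{prop:sec1} and \ref{prop:sec2}, and the reduction stated as the proposition (Theorem~3.2 of \cite{ABCGLS19}). Each lemma asserts two things about its protocol: that it is unitary-type, and that it satisfies the secrecy in the all-round criterion under the honest-server model when $\tilde{\cal S}={\cal C}$. These are precisely the two hypotheses required to invoke the proposition, so the corollary should follow by a direct application.

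First I would verify, for each protocol in turn, that both hypotheses of the proposition are literally met. For the protocol from \cite[Section 5]{KLLGR16}, Lemma~\ref{prop:sec1} states both the unitary-type property and the honest-server secrecy with $\tilde{\cal S}={\cal C}$ explicitly; for the protocol from \cite[Section 6]{KLLGR16}, Lemma~\ref{prop:sec2} plays the identical role. I would then invoke the proposition twice, once for each protocol, to upgrade the honest-server secrecy in the all-round criterion to the specious-server secrecy in the all-round criterion, keeping the same input set $\tilde{\cal S}={\cal C}$. This yields the claim of the corollary for both protocols simultaneously.

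Since the proposition is cited as a black box, there is no genuine obstacle beyond bookkeeping. The one subtlety I would double-check is that the honest-server secrecy in Lemmas~\ref{prop:sec1} and \ref{prop:sec2} is established with respect to the same class $\tilde{\cal S}={\cal C}$ that appears in the proposition's hypothesis; both lemmas are phrased precisely in that form, so no conversion between input classes is needed. With those verifications in hand, the proof reduces to one sentence per protocol.
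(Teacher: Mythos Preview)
Your proposal is correct and matches the paper's approach exactly: the paper treats the corollary as an immediate consequence of Lemmas~\ref{prop:sec1} and~\ref{prop:sec2} together with the proposition (Theorem~3.2 of \cite{ABCGLS19}), without even writing out a separate proof. Your bookkeeping checks are appropriate and there is nothing further to add.
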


Therefore, 
when the message size $d$ is fixed to a constant,
there exists a C-QPIR protocol with communication complexity 
$O(\mathrm{poly} \log \sm)$ ($O( \log \sm)$)
and without any prior entanglement (with prior entanglement)
that satisfies the secrecy in the all-round criterion 
under the specious server model when the set $\tilde{\cal S}$ of possible inputs is ${\cal C}$.

\section{Optimality of trivial protocol in final-state criterion for Q-QPIR under honest server model} \Label{sec:blind-one-opt}

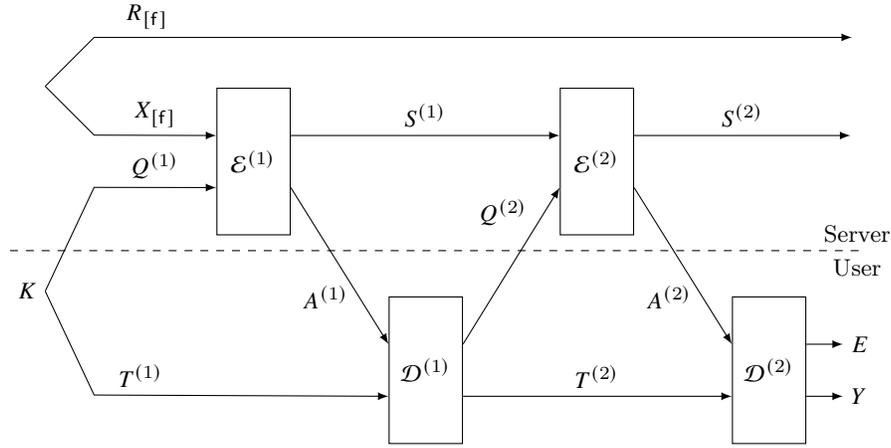
\begin{figure*}[t]
\begin{center}
\begin{tikzpicture}[scale=0.5, node distance = 3.3cm, every text node part/.style={align=center}, auto]
	\node [block, minimum height=6em] (enc1) {$\mathcal{E}^{(1)}$};
	\node [block, minimum height=6em, right=11em of enc1] (enc2) {$\mathcal{E}^{(2)}$};
	\node [block, minimum height=6em, below right=2.5em and 4em of enc1] (dec1) {$\cD^{(1)}$};
	\node [block, minimum height=6em, right=11em of dec1] (dec2) {$\cD^{(2)}$};

	\node [left=7em of enc1.145] (start) {\phantom{$X_{[\sff]}$}};
	\node [above=0.3em of start] (XX) {\phantom{$X_{[\sff]}$}};
	\path [line] (XX.east) --++(4em, 4em) -- node[above, pos=0.07] {$R_{[\sff]}$} ++(0:62em);
	\path [line] (XX.east) --++(4em,-4em) --node[above]{$X_{[\sff]}$}  (start -| enc1.west);
	
	\node [below left=5.7em and 7em of enc1.145] (K)  {$K$};
	\path [line] (K.east) --++(4em, 8.5em) -- node[above] {$Q^{(1)}$} (enc1.215);
	\path [line] (K.east) --++(4em,-8.5em) --node[above, pos=0.16]{$T^{(1)}$}  (dec1.215);

	\node [below left=3.5em and -1em of start] (lefts) {};
	\node [right=35em of lefts] (rights) {};
	\path [line,-,dashed] (lefts) -- node[below,pos=0.99]{User} node[above,pos=0.99]{Server} (rights);

	\path [line] (enc1.35) --node[above]{$S^{(1)}$} (enc2.145);
	\path [line] (enc1.325) -- node[below,pos=0.7,left=0.1em]{$A^{(1)}$} (dec1.145);
	\path [line] (enc2.325) -- node[below,pos=0.7,left=0.1em]{$A^{(2)}$} (dec2.145);

	\path [line] (dec1.35) --node[above,above left=1.3em and -1em]{$Q^{(2)}$} (enc2.215);
	\path [line] (dec1.325) -- node[above]{$T^{(2)}$} (dec2.215);

	\path [line] (enc2.35) --node[above]{$S^{(2)}$} ++(0:17.7em);
	\path [line] (dec2.325) -- ++(0:3em) node[right] {$Y$};
	\path [line] (dec2.35) -- ++(0:3em) node[right] {$E$};

\end{tikzpicture}
\caption{$2$-round QPIR protocol.}		\Label{fig:flow}
\end{center}
\end{figure*}

In this section, we prove that the trivial solution of downloading all messages is optimal 
		for Q-QPIR.
		In particular, this section, unlike the references \cite{BB15,ABCGLS19}, 
		we show the optimality in the final-state criterion under the honest-server model.
Since our setting is discussed under the honest-server model,
the secrecy in the final-state criterion is required 
only when
the server follows the determined state preparation process and determined quantum operations.
In the formal description of our protocols, we consider that the user and the server apply CPTP maps 
    but we describe the CPTP maps by the equivalent representation with the unitary maps and the local quantum memories.

To be precise, we define the $\sr$-round Q-QPIR protocol as follows. 
A $2$-round protocol is depicted in Figure~\ref{fig:flow}, and the symbols are summarized as Table \ref{tab:symbols}.
The message states are given as arbitrary $\sff$ states 
	$\rho_{[\sff]}:=\rho_1\otimes \cdots\otimes \rho_{\sff}$ on $S^{(0)} = X_{1}\otimes \cdots \otimes X_{\sff}$, where each of $\rho_\ell$ is purified in $X_\ell\otimes  R_\ell$. 
We use the notation $R_{[\sff]}:=R_1\otimes \cdots \otimes R_{\sff}$. 
The server contains the system $S^{(0)}$.
The user chooses the index of the targeted message $K\in[\sff]$, i.e., $\rho_k$ is the targeted quantum state when $K=k$.
When $K=k$, the user prepares the initial state as $|k\rangle \otimes |0\rangle \in A^{(0)} \otimes T^{(0)}$.
Although we consider the model in which the user and the server apply CPTP maps, we describe it by the equivalent representation with the unitary maps and the local quantum memories.
A Q-QPIR protocol $\Phi$ is described by unitary maps $\cD^{(0)},\ldots,\cD^{(\sr)}, \cE^{(1)},\ldots,\cE^{(\sr)}$ in the following steps.

\begin{table}[t]
\begin{center}
\caption{Definition of symbols} \Label{tab:symbols}
\begin{tabular}{|c|c|c|c|c|}
\hline
Symbol &	Definition	\\
\hline
$\sm$	&	Total size of messages (states)	\\
\hline
$\sff$	&	Number of messages (states)	\\
\hline
$\sr$	&	Number of rounds in multi-round models	\\
\hline
\end{tabular}
\end{center}
\end{table}

\begin{enumerate}[leftmargin=1.5em]
\item \textbf{Query}: 
For all $i \in [\sr]$, the user applies a unitary map $\cD^{(i-1)}$ from $A^{(i-1)}\otimes T^{(i-1)}$ to $Q^{(i)} \otimes T^{(i)}$,
and sends $Q^{(i)}$ to the sender.
Here, $T^{(i)}$ are the user's local quantum systems for describing the CPTP maps applied by the user.

\item \textbf{Answer}:
For all $i \in [\sr]$, the server applies a unitary map $\mathcal{E}^{(i)}$ from $Q^{(i)} \otimes S^{(i-1)}$ to $A^{(i)}\otimes S^{(i)} $
and sends $A^{(i)}$ to the user.
Here, $S^{(i)}$ are the server's local quantum systems for describing the CPTP maps applied by the server.

\item \textbf{Reconstruction}:
	The user applies $\cD^{(\sr)}$ from $A^{(\sr)}\otimes T^{(\sr)}$ to $Y \otimes E$, 
	and outputs the state on $Y$ as the protocol output.
	\end{enumerate}

The input-output relation $\Lambda_{\Phi}$ of the protocol $\Phi$ is written with a CPTP $\Gamma_{\Phi,k}$ from $S^{(0)}$ to $Y$ as
	\begin{align*}
	&\Lambda_{\Phi} (k,\rho_1,\ldots,\rho_{\sff}) 
	= \Gamma_{\Phi,k} (\rho_{[\sff]})\\
	&= \Tr_{S^{(\sr)},E}  \cD \ast \cE ( \rho_{[\sff]} \otimes \cD^{(0)} (|k\rangle\langle k|  \otimes |0\rangle\langle 0|) ),
	\end{align*}
	where $\cD \ast \cE = ( \cD^{(\sr)} \circ \cE^{(\sr)} )\circ\cdots \circ ( \cD^{(1)} \circ \cE^{(1)} ) $.
The QPIR protocol $\Phi$ should satisfy the following conditions.
\begin{itemize}[leftmargin=1.5em]
\item \textbf{Correctness}: 
	When $|\psi_k\rangle \langle \psi_k |$ denotes a purification of $\rho_k$ with the reference system $R_k$,
	the correctness is 
\begin{align}
	\Gamma_{\Phi,k}\otimes \id_{R_k}(\rho_{[\sff]\setminus \{k\}} \otimes |\psi_k\rangle \langle \psi_k |)
	 = |\psi_k\rangle \langle \psi_k |\Label{BJI}
\end{align}
	for any $K=k$ and any state $\rho_{[\sff]}$.

\item \textbf{Secrecy}: 
	When the final state on $S^{(\sr)} \otimes R_{[\sff]}$ 
	with the target index $K=k$ is denoted by 
	$\rho_{S^{(\sr)} R_{[\sff]}}^k,$
	the secrecy is 
		\begin{align}
		\rho_{S^{(\sr)} R_{[\sff]}}^k
		&= \rho_{S^{(\sr)} R_{[\sff]}}^{k'}
		\Label{eq:sec2cc}
		\end{align}
		for any $k,k'$.
\end{itemize}
The communication complexity of the one-server multi-round 
Q-QPIR is written as $\CC = \sum_{i=1}^{\sr}  \log |Q^{(i)}| +  \log |A^{(i)}|$.

\begin{theo} \Label{theo:multiround}
For any multi-round 
Q-QPIR protocol $\Phi$,
the communication complexity $\CC$ is lower bounded by $\sum_{\ell=1}^{\sff} \log |X_\ell|$, where $X_\ell$ is the system of the $\ell$-th message $\rho_\ell$.
\end{theo}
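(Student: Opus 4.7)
The plan is to combine the secrecy condition~\eqref{eq:sec2cc} with the correctness~\eqref{BJI} to force the server's final joint state with all reference systems to decouple into the product $\bigotimes_\ell \rho_\ell \otimes \tau$, and then to convert this decoupling into an entropic lower bound on $\CC$ by tracking $H(U)$, the entropy of the user's local system, across each round of the protocol.

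Step~1: purify the inputs as $\bigotimes_\ell|\psi_\ell\rangle_{X_\ell R_\ell}$. Since the protocol is described by unitaries that act trivially on $R_{[\sff]}$, the global state on $U\otimes S\otimes R_{[\sff]}$ stays pure throughout. Let $\sigma := \rho^{\mathrm{final}}_{S^{(\sr)}R_{[\sff]}}$, which by~\eqref{eq:sec2cc} is independent of the target index $k$. For any fixed $k$, correctness~\eqref{BJI} combined with the purity of the global state forces the final state to factorise as $|\psi_k\rangle_{YR_k}\otimes|\phi_k\rangle_{\mathrm{rest}}$, so $R_k$ is decoupled from $S^{(\sr)}R_{[\sff]\setminus k}$ with marginal $\rho_k$; hence $\sigma = \rho_k\otimes\sigma^{(k)}$ for some $\sigma^{(k)}$ on $S^{(\sr)}R_{[\sff]\setminus k}$. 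Iterating this identity for $k=1,\ldots,\sff$ and using that $\sigma$ is the same state in every case yields $\sigma = \bigotimes_\ell \rho_\ell \otimes \tau$ for some $\tau$ on $S^{(\sr)}$.

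Step~2: compute $H(U_{\mathrm{final}})$. Because the global state is pure, Proposition~\ref{PP3}$(a)$ gives $H(U_{\mathrm{final}}) = H(S^{(\sr)}R_{[\sff]})$, and then Proposition~\ref{PP3}$(b)$ applied to the product form from Step~1 gives $H(U_{\mathrm{final}}) = \sum_\ell H(\rho_\ell) + H(\tau)$. The user's initial state $|k\rangle|0\rangle$ is pure, so $H(U_{\mathrm{initial}}) = 0$. Subadditivity together with the Araki--Lieb inequality give $|H(UV)-H(U)|\le \log|V|$, so sending or receiving a register of dimension $|V|$ changes $H(U)$ by at most $\log|V|$, whereas local unitaries leave $H(U)$ invariant. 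Summing over the $2\sr$ transmissions yields $H(U_{\mathrm{final}})\le \sum_i(\log|Q^{(i)}|+\log|A^{(i)}|) = \CC$.

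Combining Steps~1 and~2 gives $\CC \ge \sum_\ell H(\rho_\ell) + H(\tau)\ge \sum_\ell H(\rho_\ell)$. Since $\CC$ depends only on $\Phi$ and correctness must hold for every admissible input, specialising to the maximally mixed messages $\rho_\ell = I_{X_\ell}/|X_\ell|$ yields $\CC\ge\sum_{\ell=1}^\sff \log|X_\ell|$ as claimed. I expect the main subtlety to be Step~1: individual decoupling of each $R_\ell$ from the server does not a priori imply joint independence, but the secrecy condition pins $\sigma$ to a single state independent of $k$, so the per-$k$ factorisations can be combined coordinate by coordinate by a short induction that traces out already-decoupled references.
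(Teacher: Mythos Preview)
Your proof is correct and follows essentially the same approach as the paper's. Your Step~1 is exactly the paper's Lemma~\ref{lemm:prrrd} (the paper phrases it as mutual independence of $R_1,\ldots,R_\sff,S^{(\sr)}$, which is the same product decomposition you obtain), and your Step~2 is a repackaging of Lemma~\ref{lemm:recurs} together with the main chain~\eqref{eq:lemmapp}--\eqref{eq:samewa}: the paper telescopes $H(T^{(i+1)})-H(T^{(i)})$ via the explicit inequality $H(A^{(i)})+H(Q^{(i+1)})\ge H(T^{(i+1)})-H(T^{(i)})$, whereas you phrase the same bound as ``each transmission of a register $V$ changes $H(U)$ by at most $\log|V|$'' using subadditivity plus Araki--Lieb; both end at $H(A^{(\sr)}T^{(\sr)})=H(R_{[\sff]}S^{(\sr)})\le\CC$ and then specialise to maximally entangled inputs.
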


For the proof of Theorem \ref{theo:multiround},
	we prepare the following lemmas.
\begin{lemm} \Label{lemm:recurs}
$H(A^{(i)}) +  H(Q^{(i+1)})  \geq H(T^{(i+1)} ) - H(T^{(i)})$.
\end{lemm}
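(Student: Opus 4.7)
The plan is to exploit the fact that $\cD^{(i)}$ is a unitary map from $A^{(i)}\otimes T^{(i)}$ to $Q^{(i+1)}\otimes T^{(i+1)}$, which by property $(c)$ preserves the entropy of the joint system. From there the inequality reduces to two applications of standard entropic inequalities from Proposition~\ref{PP3}.

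More concretely, I would proceed as follows. First, I note that since $\cD^{(i)}$ acts unitarily, we have the identity
\begin{align*}
H(A^{(i)} T^{(i)}) = H(Q^{(i+1)} T^{(i+1)}).
\end{align*}
Second, I apply the triangle-type inequality, property $(d)$, with the roles $X=Q^{(i+1)}$ and $Y=T^{(i+1)}$, giving
\begin{align*}
H(T^{(i+1)}) \leq H(Q^{(i+1)} T^{(i+1)}) + H(Q^{(i+1)}).
\end{align*}
Third, I apply subadditivity (property $(b)$) to bound
\begin{align*}
H(A^{(i)} T^{(i)}) \leq H(A^{(i)}) + H(T^{(i)}).
\end{align*}

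Chaining these three steps yields
\begin{align*}
H(T^{(i+1)}) \leq H(Q^{(i+1)}) + H(A^{(i)}) + H(T^{(i)}),
\end{align*}
which is the desired inequality after rearranging. There is no serious obstacle here: the only thing to be careful about is that $\cD^{(i)}$ maps $A^{(i)}\otimes T^{(i)}$ exactly onto $Q^{(i+1)}\otimes T^{(i+1)}$, so that the unitary invariance in step one gives an equality of joint entropies rather than just an inequality; this is guaranteed by the unitary-type description of the protocol given before Figure~\ref{fig:flow}. The lemma is really a one-line consequence of the Araki--Lieb inequality combined with subadditivity, used to relate the user's local memory at consecutive rounds via the communicated systems.
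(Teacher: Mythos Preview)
Your argument is correct and is essentially identical to the paper's proof: both apply subadditivity $(b)$ to bound $H(A^{(i)}T^{(i)})\le H(A^{(i)})+H(T^{(i)})$, unitary invariance $(c)$ of $\cD^{(i)}$ to get $H(A^{(i)}T^{(i)})=H(Q^{(i+1)}T^{(i+1)})$, and the Araki--Lieb inequality $(d)$ to obtain $H(T^{(i+1)})\le H(Q^{(i+1)}T^{(i+1)})+H(Q^{(i+1)})$, just written in a slightly different order.
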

\begin{proof}
Lemma \ref{lemm:recurs} is shown by the relation
\begin{align*}
 &H(A^{(i)}) + H(T^{(i)}) +  H(Q^{(i+1)}) \\
 &\stackrel{\mathclap{(b)}}{\geq} H(A^{(i)} T^{(i)} ) +  H(Q^{(i+1)}) \\
 &\stackrel{\mathclap{(c)}}{=} H(Q^{(i+1)} T^{(i+1)} ) +  H(Q^{(i+1)}) \\
 &\stackrel{\mathclap{(d)}}{\ge} H(T^{(i+1)} ).
%
\end{align*}
Here, $(b)$, $(c)$, and $(d)$ express the respective properties presented in 
Proposition \ref{PP3}.
\end{proof}

\begin{lemm} \Label{lemm:prrrd}
The relation 
$H( R_{[\sff]} S^{(\sr)}) \ge \sum_{\ell=1}^\sff H( R_{\ell})$ 
holds.
\end{lemm}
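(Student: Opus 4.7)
The plan is to exploit purification together with the correctness condition to expose a rigid product structure on $\rho_{R_{[\sff]} S^{(\sr)}}$, and then conclude via the equality case of subadditivity. First I would dilate the protocol into the unitary form introduced just before Theorem~\ref{theo:multiround}, so that starting from the pure input $|\psi_1\rangle\cdots|\psi_\sff\rangle\otimes|k\rangle|0\rangle$, the overall final state $|\Psi^{(k)}\rangle$ on $Y\otimes E\otimes S^{(\sr)}\otimes R_{[\sff]}$ remains pure.

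The key observation is that correctness~\eqref{BJI}, after tracing out the remaining reference systems, forces the marginal of $|\Psi^{(k)}\rangle\langle\Psi^{(k)}|$ on $Y\otimes R_k$ to equal the pure state $|\psi_k\rangle\langle\psi_k|$. Because a pure bipartite state with a pure marginal must have a trivial Schmidt decomposition,
\[
|\Psi^{(k)}\rangle = |\psi_k\rangle_{YR_k}\otimes|\phi_k\rangle_{ES^{(\sr)}R_{[\sff]\setminus\{k\}}}
\]
for some pure $|\phi_k\rangle$. Tracing out $Y$ and $E$ then yields
\[
\rho^{(k)}_{R_{[\sff]}S^{(\sr)}} = \rho_{R_k}\otimes\sigma^{(k)}_{R_{[\sff]\setminus\{k\}}S^{(\sr)}}.
\]
By the secrecy condition~\eqref{eq:sec2cc}, $\rho_{R_{[\sff]}S^{(\sr)}}$ does not depend on $k$, so this factorization holds simultaneously for every $k\in[\sff]$.

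From these $\sff$ simultaneous factorizations I would peel off the reference systems one at a time by induction on $j\in[0:\sff]$:
\[
\rho_{R_{[\sff]}S^{(\sr)}} = \bigotimes_{\ell=1}^{j}\rho_{R_\ell}\otimes\tau^{(j)}_{R_{[\sff]\setminus[j]}S^{(\sr)}},
\]
the step from $j$ to $j+1$ being obtained by tracing $R_1,\ldots,R_j$ out of the $k=j+1$ factorization and matching the result against the inductive hypothesis. At $j=\sff$ this produces the fully decoupled form $\rho_{R_{[\sff]}S^{(\sr)}}=\rho_{R_1}\otimes\cdots\otimes\rho_{R_\sff}\otimes\rho'_{S^{(\sr)}}$. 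The equality case of property~$(b)$ applied to this product state then gives
\[
H(R_{[\sff]}S^{(\sr)}) = \sum_{\ell=1}^{\sff} H(R_\ell) + H(\rho'_{S^{(\sr)}}) \ge \sum_{\ell=1}^{\sff} H(R_\ell),
\]
using nonnegativity of the von Neumann entropy.

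The main subtlety is the second step: upgrading the correctness identity $\rho^{(k)}_{YR_k}=|\psi_k\rangle\langle\psi_k|$ to a tensor-factor splitting of the entire purified global state relies on the Stinespring/purification picture together with the rigidity of the Schmidt decomposition for pure states with pure marginals. Once secrecy supplies simultaneous factorizations for every~$k$, the inductive peeling and the concluding entropy manipulation are routine.
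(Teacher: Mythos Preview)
Your proposal is correct and follows essentially the same approach as the paper: correctness forces the marginal on $Y\otimes R_k$ to be pure, hence $R_k$ decouples from $R_{[\sff]\setminus\{k\}}S^{(\sr)}$; secrecy makes this hold for every $k$ simultaneously, yielding the product structure and the entropy equality $H(R_{[\sff]}S^{(\sr)})=H(S^{(\sr)})+\sum_\ell H(R_\ell)$. The paper states the independence step in one line (``$R_k$ is independent of any system except for $Y$''), whereas you spell it out via the Schmidt decomposition and then run an explicit induction to peel off the $R_\ell$'s one by one; these are just the same argument at different levels of detail.
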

\begin{proof}
Given the user's input $k$, 
Correctness \eqref{BJI} guarantees that
the final state on $R_k\otimes Y$ is a pure state, 
and therefore,
	$R_k$ is independent of any system except for $Y$.
Thus, $R_{k}$ is independent of 
$R_{[\sff]\setminus \{k\} } S^{(\sr)}$.
The secrecy condition \eqref{eq:sec2cc} guarantees that
the final state on $R_{[\sff]} \otimes S^{(\sr)}$ does not depend on $k$.
Hence, $R_1, \ldots, R_\sff$, and $ S^{(\sr)}$ are independent of each other.
Therefore, we have
\begin{align}	
H( R_{[\sff]} S^{(\sr)} ) = H(S^{(\sr)})+ \sum_{\ell=1}^\sff H( R_{\ell})  
\ge \sum_{\ell=1}^\sff H( R_{\ell})
\Label{eq:222f}.
\end{align}
\end{proof}

\begin{proof}[Proof of Theorem~\ref{theo:multiround}]
We choose the initial state on $R_\ell\otimes X_{\ell}$ to be the maximally entangled state for 
$\ell=1, \ldots, \sff$.
From Lemmas~\ref{lemm:recurs} and \ref{lemm:prrrd}, we derive the following inequalities:
\begin{align}
	&\CC \geq \sum_{i=1}^{\sr} \big(H(A^{(i)}) + H(Q^{(i)})\big)\nonumber\\
	 &= H(A^{(\sr)}) + H(Q^{(1)}) 
	 + \sum_{i=1}^{\sr-1} \big(H(A^{(i)}) + H(Q^{(i+1)}) \big)	\nonumber\\
	 &\geq H(A^{(\sr)}) + H(Q^{(1)}) + H(T^{(\sr)}) - H(T^{(1)}) \Label{eq:lemmapp}\\
	 &
	 = H(A^{(\sr)}) + H(T^{(\sr)}) \Label{eq:purrr}\\
	 &\stackrel{\mathclap{(b)}}{\geq} H(A^{(\sr)} T^{(\sr)}) \stackrel{\mathclap{(a)}}{=} H(R_{[\sff]} S^{(\sr)}) \nonumber\\
	 &\geq \sum_{\ell=1}^\sff H( R_{\ell})=\sum_{\ell=1}^{\sff} \log |X_\ell|
	 ,\Label{eq:samewa} 
\end{align}
where 
$(a)$ and $(b)$ express the respective properties presented in 
Proposition \ref{PP3}.
In addition,
\eqref{eq:lemmapp} is obtained by applying Lemma~\ref{lemm:recurs} for all $i=1,\ldots, \sr-1$.
The step \eqref{eq:purrr} follows from $H(Q^{(1)}) = H(T^{(1)})$ which 
 		holds due to the property $(a)$ in Proposition \ref{PP3}, 		
 		because the state on $Q^{(1)} T^{(1)}$ is the pure state
 		as  the state on $Q^{(0)} T^{(0)}$ is the pure state.
The step \eqref{eq:samewa} follows from Lemma~\ref{lemm:prrrd}.
\end{proof}

\section{Q-QPIR Protocol with Prior Entanglement under honest-server model} \Label{sec:ent}
In the previous section, we proved that the trivial solution is optimal 
even in the final-state criterion under the honest one-server model of Q-QPIR.
In this section, we construct a Q-QPIR protocol with lower communication complexity 
under various secrecy models than the trivial solution
when we allow shared entanglement between the user and the server.

Let $\sm = \sum_{\ell=1}^{\sff} \log |X_\ell|$ be the size of all messages.
To measure the amount of the prior entanglement,
we count sharing one copy of $|\sI_2\rrangle = (1/\sqrt{2})(|00\rangle + |11\rangle)$ as an {\em ebit}. 
Accordingly, we count sharing the state $|\sI_d\rrangle \in \mathbb{C}^d \otimes \mathbb{C}^d$ as $\log d$ ebits.

\begin{theo} \Label{theo:prot} 
Suppose there exists a C-QPIR protocol under a certain secrecy model
with communication complexity $f(d_1, \ldots, d_\sff)$
when $g(d_1, \ldots, d_\sff)$-ebit prior entanglement is shared between the user and the server.
Then, there exists a Q-QPIR protocol under the same secrecy model with communication complexity 
$f(d_1^2, \ldots, d_\sff^2)$
when $\sm+g(d_1, \ldots, d_\sff)$-ebit prior entanglement is shared between the user and the server.
\end{theo}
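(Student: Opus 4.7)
My plan is to combine the assumed C-QPIR protocol with quantum teleportation, having the server teleport each quantum message toward the user and using the C-QPIR subroutine to deliver only the classical correction outcomes needed for the target index. The prior entanglement is split into two pieces: the $g(d_1,\ldots,d_\sff)$-ebit resource required by the C-QPIR subprotocol, plus one copy of the maximally entangled state $|\sI_{d_\ell}\rrangle$ on a server--user qudit pair for each $\ell\in[\sff]$, contributing the additional $\sum_{\ell=1}^{\sff}\log d_\ell=\sm$ ebits.

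For each $\ell$, the server first performs the generalized Bell measurement $\mathbf{M}_{\sX\sZ,d_\ell}$ on the joint system consisting of the $\ell$-th message register and the server's half of the $\ell$-th shared maximally entangled pair, obtaining a classical outcome $(a_\ell,b_\ell)\in[0:d_\ell-1]^2$. By the identity \eqref{qe:feafterf}, the post-measurement state on the user's half of the $\ell$-th pair then coincides with the purified message up to the local Pauli $\sX^{a_\ell}\sZ^{-b_\ell}$ on the user's side. Treating each pair $(a_\ell,b_\ell)$ as a single classical symbol in $\bZ_{d_\ell^2}$, the user and server then execute the assumed C-QPIR protocol on these $\sff$ classical messages, so that the user learns $(a_K,b_K)$ and can undo the residual Pauli on its half of the $K$-th pair. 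Correctness then follows directly from \eqref{qe:feafterf} applied to the purified target $\rho_K$, the only communication is that of the C-QPIR subprotocol, which yields complexity $f(d_1^2,\ldots,d_\sff^2)$, and the total entanglement cost is $\sm+g(d_1,\ldots,d_\sff)$ ebits, as claimed.

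For secrecy, the server's view decomposes into (i) the Bell outcomes held locally, whose marginal distribution on each $[0:d_\ell-1]^2$ is uniform and manifestly independent of $K$, and (ii) the server's view inside the C-QPIR subprotocol, which is independent of $K$ by the assumed security in the given secrecy model; hence the composed protocol inherits the C-QPIR secrecy in the honest-server setting. The main obstacle will be establishing this inheritance under the specious-server model, because a cheating server need not actually collapse the Bell measurements: it can dilate them to a unitary and keep the outcomes coherent. Handling this will require showing that any specious attack on the composed protocol projects, by tracing out the teleportation registers, to a specious attack on the underlying C-QPIR, so that a simulator witnessing honesty at round $j$ of the composed protocol can be obtained by post-composing the C-QPIR simulator with the (possibly uncollapsed) Bell-measurement dilation.
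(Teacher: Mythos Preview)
Your proposal is correct and matches the paper's approach exactly: both combine quantum teleportation with the given C-QPIR subroutine (Protocol~\ref{prot:ent} in the paper), with the server Bell-measuring each message against its half of a shared $|\sI_{d_\ell}\rrangle$ and the user retrieving the target correction via C-QPIR on the $d_\ell^2$-sized outcomes. Your caution about the specious-server case is appropriate but goes beyond what the paper does for Theorem~\ref{theo:prot} itself; the paper's secrecy argument for the theorem is simply that all communication lies inside Step~2, and the specious subtlety is handled separately in the proof of Corollary~\ref{CO2}.
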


The protocol satisfying Theorem \ref{theo:prot} is a simple combination of quantum teleportation \cite{BBCJPW93} and any C-QPIR protocol.
For the description of the protocol, we use the generalized Pauli operators and maximally entangled state for $d$-dimensional systems defined in \eqref{eq:defs-pauli}.
Hence, the type of guaranteed secrecy in the original 
C-QPIR protocol is inherited to the converted QPIR protocol.
We construct the Q-QPIR protocol satisfying Theorem \ref{theo:prot} as follows.
\begin{prot} \Label{prot:ent}
Let $\Phi_{\mathrm{cl}}$ be a C-QPIR protocol 
	and $d_1,\ldots, d_{\sff}$ be the size of the $\sff$ classical messages. 
From this protocol, we construct a Q-QPIR protocol as follows.

Let $X_1,\ldots, X_\sff$ be the quantum systems with dimensions $d_1,\ldots, d_{\sff}$, respectively, and 
 $\rho_1,\ldots, \rho_{\sff}$ be the quantum message states on systems $X_1,\ldots, X_\sff$.
The user and the server share the maximally entangled states $|\sI_{d_\ell}\rrangle$, defined in \eqref{eq:defs-pauli}, on ${Y_\ell\otimes Y_\ell'}$ for all $\ell\in[\sff]$,
	where $Y_{[\sff]}$ and $Y_{[\sff]}'$ are possessed by the user and the server, respectively.
	
The user and the server perform the following steps.
\begin{enumerate}[leftmargin=1.5em]
\item[1)] \textbf{Preparation}:
		For all $\ell\in[\sff]$, the server performs the generalized Bell measurement
		$\mathbf{M}_{\sX\sZ,d_\ell}$, defined in \eqref{mes},
		on $X_\ell\otimes Y_\ell'$, where the measurement outcome is written as $m_\ell = (a_\ell, b_\ell)\in [0:  d_\ell-1]^2$.

\item[2)]\textbf{Use of C-QPIR protocol}: 
	The user and the server perform the 
	C-QPIR protocol $\Phi_{\mathrm{cl}}$ to retrieve $m_k = (a_k,b_k)$.
\item[3)]\textbf{Reconstruction}:
 The user recovers the $k$-th message $\rho_k$ by applying $\sX_{d_k}^{-a_k}\sZ_{d_k}^{b_k} $ on $Y_k$.
\QEDA
\end{enumerate}
\end{prot}
The correctness of the protocol is guaranteed by the correctness of the teleportation protocol and the 
C-QPIR protocol $\Phi_{\mathrm{cl}}$.
When the $\ell$-th message state is prepared as $\rho_\ell$ and its purification $|\phi_\ell\rangle$ is denoted with the reference system $R_\ell$,
	after Step~1, the states on $R_\ell\otimes Y_\ell$ is 
	\begin{align}
	( \sI\otimes  \sX_{d_\ell}^{a_\ell} \sZ_{d_\ell}^{-b_\ell}) |\phi_\ell\rangle
	\end{align}
for all $\ell\in[\sff]$.
Thus after Step~3, the targeted state $|{\phi_k}\rangle$ is recovered in $R_k\otimes Y_k$.

To analyze the secrecy of Protocol~\ref{prot:ent}, 
	note that only Step~2 has the communication between the user and the server.
Thus the secrecy of Protocol~\ref{prot:ent} is guaranteed by the secrecy of the underlying protocol $\Phi_{\mathrm{cl}}$.

Protocol \ref{PR2} (Protocol \ref{PR2B}) 
is a one-round C-QPIR protocol in the final-state criterion
under the honest-server model (the specious-server model) with input states ${\cal C}$
with communication complexity $2 \log\sff+\log d$ ($4 \log\sff+2 \log d$).
Therefore, 
the combination of Protocols \ref{PR2} and \ref{prot:ent} and 
the combination of Protocols \ref{PR2B} and \ref{prot:ent}
yield the following corollary. 
\begin{coro}\Label{CO2}
There exists a Q-QPIR protocol 
	with communication complexity $2 \log\sff +\log d^2=2\log \sff d $
	($4 \log\sff +2 \log d^2=4\log \sff d $)
	and prior entanglement $\sm $ that satisfies the secrecy in the final-state criterion
under the honest-server model (the specious-server model).
	When $d$ is a constant,
the communication complexity is $2 \log \sm+o(\sm)$ ($4 \log \sm+o(\sm)$).
\end{coro}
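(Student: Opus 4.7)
The plan is to obtain the corollary as a direct instantiation of Theorem~\ref{theo:prot}, applied once with Protocol~\ref{PR2} as the underlying C-QPIR protocol (for the honest-server bound) and once with Protocol~\ref{PR2B} (for the specious-server bound). Both underlying protocols are one-round, use zero ebits of prior entanglement, and satisfy the secrecy in the final-state criterion under their respective server models with classical input set $\tilde{\mathcal{S}} = \mathcal{C}$. Because Theorem~\ref{theo:prot} preserves the secrecy model of the underlying C-QPIR protocol, the converted Q-QPIR protocols inherit exactly the claimed secrecy.

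First I would verify the communication complexity by substitution. Protocol~\ref{PR2} satisfies $CC(\Phi_{\sff,d}) = 2\log\sff + \log d$, so applying Theorem~\ref{theo:prot} replaces each alphabet size $d_\ell$ by $d_\ell^2$, yielding $2\log\sff + \log d^2 = 2\log(\sff d)$. Similarly, Protocol~\ref{PR2B} gives $CC = 4\log\sff + 2\log d$, which becomes $4\log\sff + 2\log d^2 = 4\log(\sff d)$ after the same substitution. Next I would tally the ebit budget: the underlying C-QPIR protocols use $0$ ebits, and the teleportation step of Protocol~\ref{prot:ent} consumes one maximally entangled state $|\sI_{d_\ell}\rrangle$ per message, contributing $\sum_{\ell=1}^\sff \log d_\ell = \sm$ ebits in total. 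Hence the converted Q-QPIR protocol uses exactly $\sm$ ebits of prior entanglement, as claimed.

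The only consistency check I would make is that the substitution $d_\ell \mapsto d_\ell^2$ in Theorem~\ref{theo:prot} matches the actual structure of Protocol~\ref{prot:ent}: the generalized Bell measurement $\mathbf{M}_{\sX\sZ,d_\ell}$ in Step~1 outputs $m_\ell = (a_\ell,b_\ell) \in [0:d_\ell-1]^2$, a classical symbol from an alphabet of size $d_\ell^2$, so the C-QPIR protocol in Step~2 is indeed invoked on classical messages of sizes $d_1^2,\ldots,d_\sff^2$, as the theorem requires. For the asymptotic statement, fixing $d$ as a constant gives $\sff = \sm / \log d$, so $2\log(\sff d) = 2\log\sm + O(1) = 2\log\sm + o(\sm)$, and likewise $4\log(\sff d) = 4\log\sm + o(\sm)$. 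No significant obstacle is expected; the corollary is essentially bookkeeping once Theorem~\ref{theo:prot} and Protocols~\ref{PR2} and~\ref{PR2B} are in place.
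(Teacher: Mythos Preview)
Your approach matches the paper's: both obtain the corollary by instantiating Theorem~\ref{theo:prot} (equivalently, Protocol~\ref{prot:ent}) with Protocol~\ref{PR2} for the honest-server bound and Protocol~\ref{PR2B} for the specious-server bound, and your complexity and ebit bookkeeping is correct.

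There is, however, one point the paper treats with more care than you do. For the honest-server model the paper also declares the claim trivial, exactly as you do. But for the specious-server model the paper does \emph{not} simply invoke Theorem~\ref{theo:prot}; it supplies an extra argument. The reason is that ``specious'' is defined relative to the whole protocol: a server that is specious for the composite Q-QPIR protocol (Protocol~\ref{prot:ent}) must be able to reproduce the honest global state, and that global state includes the user's entanglement halves $Y_1,\ldots,Y_\sff$, not just the C-QPIR transcript. It is not automatic that such an adversary restricts to a specious adversary for the underlying C-QPIR subroutine $\Phi_{\mathrm{cl}}$, which is what you need in order to invoke the secrecy of Protocol~\ref{PR2B}. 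The paper closes this gap by arguing that the specious condition forces the user's entanglement-half state $\sigma_1$ to be exactly the honest post-teleportation state (a uniform mixture of $\sX^a\sZ^b\rho_K(\sX^a\sZ^b)^\dagger$), and hence the state $\sigma_2$ in the C-QPIR registers must agree with what a specious server of Protocol~\ref{PR2B} would produce given the same Bell outcomes $(a,b)$. Only then does the secrecy of Protocol~\ref{PR2B} under specious attacks transfer to the composite protocol.

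So your proposal is correct in outline, but for the specious case you are leaning on the bare statement of Theorem~\ref{theo:prot} at a place where the paper itself felt the need to justify the reduction from ``specious against Protocol~\ref{prot:ent}'' to ``specious against Protocol~\ref{PR2B}''. Adding that one paragraph would make your argument complete.
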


\begin{proof}
The case under the honest-server model is trivial.
Hence, we show the desired statement under the specious-server model.

Assume that the server makes a specious attack.
The user's state at the end of Step 2) of Protocol \ref{prot:ent}
is the pair of 
entanglement halves $\sigma_1$
and the state transmitted at Step 2) of Protocol \ref{PR2B}
$\sigma_2$.
Due to the specious condition, 
the state $\sigma_1$ needs to be 
one of the states
$\{\sX^a\sZ^b \rho_K(\sX^a\sZ^b)^\dagger\}_{(a,b) \in [0:d-1]^2}$
with equal probability.
That is, using the random variable $(a,b) \in [0:d-1]^2$ under the uniform distribution,
the state $\sigma_1$ is written as $\sX^a\sZ^b \rho_K(\sX^a\sZ^b)^\dagger$.
Hence, the state $\sigma_2$ needs to be decided according to 
the random variable $(a,b)$ in the same way as the honest case.
That is, the state $\sigma_2$ satisfies the condition for the state transmitted by
a specious server of Protocol \ref{PR2B} at Step 2). 
Since Protocol \ref{PR2B} satisfies the secrecy under the final-state criterion
under the specious-server model with input states ${\cal C}$,
the specious server obtains no information in the final state. 
That is, the combined Q-QPIR protocol with prior entanglement
satisfies the secrecy under the final-state criterion
under the specious-server model.
\end{proof}

Combining Theorem \ref{theo:prot} and Corollary \ref{CCor1},
we obtain the following corollary.

\begin{coro}\Label{CO3}
There exists a Q-QPIR protocol 
	with communication complexity $O( \log \sm)$
	and prior entanglement of $\Theta(\sm)$ ebits 
	that satisfies the secrecy in the all-round criterion
under the honest-server model
	when the message size $d$ is fixed to a constant.
\end{coro}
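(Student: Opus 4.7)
The plan is to combine the C-QPIR protocol from \cite[Section 6]{KLLGR16} with the generic conversion given in Theorem~\ref{theo:prot}. By Lemma~\ref{prop:sec2} (and more strongly by Corollary~\ref{CCor1}), the protocol of \cite[Section 6]{KLLGR16} satisfies the secrecy in the all-round criterion under the honest-server model with input states $\mathcal{C}$; when the message size $d$ is fixed to a constant, it has communication complexity $O(\log \sm)$ and uses $\Theta(\sm)$ ebits of prior entanglement. Although that protocol is described for $d=2$, the remark preceding Lemma~\ref{prop:sec1} explains that it extends to any constant $d$ by treating $\lceil \log_2 d\rceil$ binary messages as a single message, at no asymptotic cost in communication or entanglement.

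Feeding this C-QPIR protocol into Theorem~\ref{theo:prot} yields a Q-QPIR protocol for $\sff$ quantum messages of constant dimension $d$. The resulting communication complexity is $f(d_1^2,\ldots,d_\sff^2)$, where $f$ is the communication-complexity function of the underlying C-QPIR. With $d_\ell=d$ fixed, $d_\ell^2$ is still a constant, so the alphabet of classical outcomes that must be retrieved by the inner C-QPIR remains of constant size; hence the total classical content is $\Theta(\sm)$ and the communication complexity remains $O(\log \sm)$. The entanglement cost is $\sm + g(d_1,\ldots,d_\sff)$ ebits, where the $\sm$ ebits are consumed by the teleportation in Step~1 of Protocol~\ref{prot:ent} and $g(d_1,\ldots,d_\sff)=\Theta(\sm)$ is the ebit cost of the underlying C-QPIR; the total is therefore $\Theta(\sm)$ ebits, as claimed.

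The secrecy transfers directly: in Protocol~\ref{prot:ent}, the only communication between the user and the server occurs in Step~2, which is precisely the invocation of the C-QPIR subprotocol; Steps~1 and~3 are local operations on the pre-shared entanglement and the message qudits, and leak no information about $K$. Consequently, an all-round adversary on the Q-QPIR sees exactly what an all-round adversary on the embedded C-QPIR sees, so the all-round honest-server secrecy of the C-QPIR lifts without modification. There is no real obstacle here; the statement is essentially a parameter-tracking exercise combining Theorem~\ref{theo:prot} with Corollary~\ref{CCor1}, and the only point to check carefully is that squaring the message alphabet and adding $\sm$ teleportation ebits does not change either bound asymptotically, which we have verified above.
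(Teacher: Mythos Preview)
Your proposal is correct and follows essentially the same approach as the paper: the paper simply states that Corollary~\ref{CO3} is obtained by ``combining Theorem~\ref{theo:prot} and Corollary~\ref{CCor1},'' and your argument spells out exactly this combination, tracking the communication and entanglement parameters and noting that secrecy transfers because all communication in Protocol~\ref{prot:ent} occurs inside the C-QPIR subprotocol.
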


One property of Protocol~\ref{prot:ent} is that all other states in the server are destroyed at Step~1.
This is a disadvantage for the server but an advantage for the user since the user can retrieve other states $\rho_\ell$ 
if the user could retrieve classical information $m_\ell \in [0:d_{\ell}-1]^2$
corresponding to the state $\rho_\ell$. 

\if0
\section{One-round C-QPIR under dishonest-server model}\Label{S6}
In this section, we discuss what happens under the one-round setting with the dishonest-server model.
In this setting, we show that 
the trivial solution of downloading all messages cannot be improved at all 
under the dishonest-server model
even when prior entanglement is allowed between the sender and the user
and the server knows the classical information of the message quantum state, which is often called
the visible setting \cite{269,35,201},
in which quantum state compression has also been extensively studied \cite{423,288,290,269,35,201}.
To show this negative result, we discuss C-QPIR as a simpler problem.
That is, we consider the case when the message is limited to the orthogonal states with respect to the computational basis.
As shown below, 
even under this limited case, the improvement over the trivial solution of downloading all messages is very limited.
Since Q-QPIR complexity is at least C-QPIR complexity, 
the lower bound of C-QPIR derived in this section also applies to Q-QPIR.

\subsection{Optimal C-QPIR protocol with prior entanglement}
Under the dishonest-server model,
we consider the case when 
the message is limited to the orthogonal basis state
and prior entanglement is allowed between the sender and the user.
In this case, using dense coding, the communication complexity can be reduced to $\frac{\sm}{2}$.
As shown later, this protocol is optimal.
For the impossibility part, we have the following lemma.
\begin{lemm}\Label{LL9}
When a one-round C-QPIR protocol $\Phi$ with prior entanglement is correct
and satisfies the secrecy under the dishonest-server model,
the communication complexity $CC(\Phi)$ 
nor the download complexity $DC(\Phi)$
is not smaller than $\frac{\sm}{2}$.
\end{lemm}

\begin{proof}
Assume that the server and the user share a prior entangled state across
the server's system ${\cal H}_1$ and the user's system ${\cal H}_2$.
In a one-round protocol,
the sender's upload is generally written as follows. 
The user uploads the system ${\cal H}_1'$ that is correlated to his local system
${\cal H}_2'$, 
where the user's information $K$ is included in the system ${\cal H}_2'$.
Hence, the server's encoded operation dependently on the index $K$
is given as an operation $\Gamma_K$
on ${\cal H}_1'\otimes {\cal H}_2'\otimes {\cal H}_2$.
Without loss of generality, we choose the system ${\cal H}_2'$ as a sufficiently large system such that
$\Gamma_K$ is a unitary and 
the initial state on ${\cal H}_1'\otimes {\cal H}_2'\otimes {\cal H}_2$ is a pure state.
Here, $\rho_K$ expresses the state on ${\cal H}_1'\otimes {\cal H}_2'\otimes {\cal H}_2$
after the operation $\Gamma_K$.
Due to the condition of the dishonest-server model, 
the reduced density $\Tr_{{\cal H}_2'\otimes {\cal H}_2} \rho_K$ does not depend on the index $K$.
Hence, the encoding operation $\Gamma_K$ is given as a unitary only on 
${\cal H}_2'\otimes {\cal H}_2$.
Thus, this protocol can be converted to a protocol with the encoding operation
without upload communication 
when the prior entangled state is given across
the server's system ${\cal H}_1\otimes {\cal H}_1'$ and 
the user's system ${\cal H}_2\otimes {\cal H}_2'$.
Therefore, it is sufficient to consider the case when the protocol has only download communication
and the prior entangled state $\rho_{1,2}$ across
the server's system ${\cal H}_1$ and the user's system ${\cal H}_2$.
That is, this protocol has only the following communication: 
the server sends the output system ${\cal H}_3$ to the user after the server's encoding operation
to $ {\cal H}_1$.

For any message $i \in [2^{\sm}]$, 
the sender's encoding is a TP-CP map $\varphi(i)$
from the system ${\cal H}_1$ to the system ${\cal H}_3$.
Due to the correctness, 
$2^{\sm}$ states $\varphi(j)(\rho_{1,2})$ with $j=1,\ldots 2^{\sm}$ needs to be orthogonal each other.
Then, the information processing inequality implies 
\begin{align}
\sm 
{\le} \sum_{j=1}^{2^{\sm}}2^{-\sm} 
D\Big( \varphi(j)(\rho_{1,2}) \Big\|
\sum_{j=1}^{2^{\sm}}2^{-\sm}  \varphi(j)(\rho_{1,2})
\Big) \Label{ACC}.
\end{align}
Denoting the identity operation in the system ${\cal H}_3$ by $\iota_3$,
we have
\begin{align}
& \sum_{j=1}^{2^{\sm}}2^{-\sm} 
D\Big( \varphi(j)(\rho_{1,2}) \Big\|
\sum_{j=1}^{2^{\sm}}2^{-\sm}  \varphi(j)(\rho_{1,2})
\Big) \nonumber\\
{\le} &
I\Big(\sum_{j=1}^{2^{\sm}}2^{-\sm}  \varphi(j)(\rho_{1}), \iota_3\Big) \Label{NN1}\\
{=}&2 H(\sum_{j=1}^{2^{\sm}}2^{-\sm}  \varphi(j)(\rho_{1}))\Label{NN2}
\\
\le & 2 \log \dim {\cal H}_3.\Label{ACC2}
\end{align}
where \eqref{NN2} follows from \eqref{MMX} and 
\eqref{NN1} follows from Exercise 9.5 of \cite{Hayashi}.
The combination of \eqref{ACC} and \eqref{ACC2} implies the desired statement.
\end{proof}

\subsection{Optimality without prior entanglement}
Next, we consider the case when the message is limited to the orthogonal basis state
and no prior entanglement is allowed between the sender and the user.
In this case, the following protocol without user's quantum memory can reduce the download complexity to $\frac{\sm}{2}$
when $\sff$ is an even number $2 \sff_0$.
We describe all classical messages with the indexes $1, \ldots, \sff_0$ 
($\sff_0+1, \ldots, 2 \sff_0$) by the one random variable $X$ ($Z$) that takes values in $[2^{\frac{\sm}{2}}]$.

\begin{prot}[one-round C-QPIR protocol with one server] \Label{PR2C}
Our protocol is given as follows.
\begin{description}[leftmargin=1.5em]
\item[1)] \textbf{Query}: 
The user generates the random variable $Y \in [2^{\frac{\sm}{2}}]$ 
subject to the uniform distribution.
When $K \in [\sff_0]$,
the user sends the state $|Y\rangle$ to the server.
When $K \in [\sff_0+1,2\sff_0]$,
the user sends the state $|u_{Y}\rangle$ defined in \eqref{E12}
to the server.
\item[2)] \textbf{Answer}: 
The server applies the unitary $\sX^X \sZ^Z$ to the system sent by the user, and sends back the system to the user.
\item[3)] \textbf{Reconstruction}:
When $K \in [\sff_0]$,
the user measures the received system with the basis $\{|s\rangle\}_{s=0}^{d-1}$ and obtains the outcome $Y'$.
When $K \in [\sff_0+1,2\sff_0]$,
the user measures the received system with the basis $\{|u_s\rangle\}_{s=0}^{d-1}$ and obtains the outcome $Y'$.
Then, the user recovers the information by calculating $Y'-Y$.
\end{description}
\end{prot}

When the state is given by the basis $\{|s\rangle\}_{s=0}^{d-1}$,
the state is not changed by the operator $\sZ^Z$. Hence,
the correctness holds when $K \in [\sff_0]$. 
Also, the correctness can be shown in the same way when $K \in [\sff_0+1,2\sff_0]$.
Since the transmitted state from the user to the server is always the completely mixed state,
the server cannot obtain any information for the index $K$
even under the dishonest-server model.

Although Protocol \ref{PR2C} is correct,
when the user sends the state $|Y\rangle$ to the server in Step 1),
the user cannot recover messages with any index among $\sff_0+1, \ldots, 2 \sff_0$
in Protocol \ref{PR2C} even when the user uses any measurement in Step 3).
This fact shows that the analysis on one-round protocols cannot be reduced to the 
analysis on random access codes.

Protocol \ref{PR2C} satisfies the correctness and the secrecy even under the dishonest-server model,
and its download complexity is a half of that of the trivial solution.
In contrast, it has the same communication complexity as that of the trivial solution
due to its upload complexity.
Further, the following lemma shows that 
Protocol \ref{PR2C} achieves the minimum communication complexity
and the minimum download complexity
while it does not require any quantum memory.

\begin{lemm}\Label{LL92}
When a one-round C-QPIR protocol $\Phi$ without prior entanglement is correct
and satisfies the secrecy under the dishonest-server model,
the communication complexity $CC(\Phi)$ is not smaller than $\sm$
and
the download complexity is not smaller than $\frac{\sm}{2}$.
\end{lemm}

\begin{proof}
We use the same notation as the proof of Lemma \ref{LL9}.
As discussed in the proof of Lemma \ref{LL9},
the uploading process can be considered as sharing the prior entangled state
over the server's system ${\cal H}_1$ and the user's system ${\cal H}_2$,
where the dimensions of ${\cal H}_1$ and ${\cal H}_2$ are the same.
Therefore, Lemma \ref{LL9} implies
$DC(\Phi)\ge \frac{\sm}{2}$.

Also, 
due to the correctness, 
$2^{\sm}$ states $\varphi(j)(\rho_{1,2})$ with $j=1,\ldots 2^{\sm}$ needs to be orthogonal each other.
Hence, the dimension of ${\cal H}_3\otimes {\cal H}_2$ is not smaller than $2^{\sm}$.
Since $\dim {\cal H}_2=\dim {\cal H}_1 $, this fact implies $CC(\Phi)\ge \sm$.
\end{proof}

\subsection{Observations for Protocol \ref{PR2C}}
We have the following observations for Protocol \ref{PR2C}.
Protocol \ref{PR2C} works as a symmetric C-QPIR protocol as follows when $\sff=2$.
In a symmetric C-QPIR protocol, we need to care about the server secrecy, i.e., 
the secrecy of the messages except for  
the message intended by the user.
For the server secrecy, we have two models. One is the honest-user model where
the server secrecy is discussed only for an honest user.
The other is the dishonest-user model where
the server secrecy is discussed even when the behavior of the user deviates from the protocol.

Now, we consider Protocol \ref{PR2C} when $\sff=2$, i.e., we have only two messages $X$ and $Z$.
In this case, the size of the message is $d$.
When the user follows the protocol, at the end of Step 2), the user has the variable $Y$
and the $d$-dimensional system ${\cal H}$.
$Y$ is independent of $X$ nor $Z$.
When $K=1$, the user has information only in the system ${\cal H}$ with dimension $d$, 
which is the same size as the size of the intended information $X$.
When the server follows the protocol, 
the user has no information for $Z$ because $\sZ^Z$ does not change the state on ${\cal H}$.
The same discussion holds when $K=2$.
Hence, Protocol \ref{PR2C} satisfies the server secrecy under the honest-user model.

However, when the user generates the maximally entangled state $|I\rrangle$ and sends
an entanglement half to the server in Step 1),
the server secrecy does not hold because 
the users state is  $| \sX^X \sZ^Z\rrangle$ at the end of Step 2).
Hence, Protocol \ref{PR2C} does not satisfy the server secrecy under the dishonest-user model.

However, 
Protocol \ref{PR2C} satisfies the server secrecy even under the dishonest-user model
when we assume the SEP model, i.e., a special type of general probability theory (GPT).
In the SEP model, only separable states are allowed \cite{AYH}.
Assume that the system sent by the user and the user's local memory
are the SEP model and that the server is honest.
In this case, at the end of Step 2),
the user has the system ${\cal H}$ and his local system.
Since his local system is independent of the messages $X,Z$ and 
the state on the joint system is separable,
the user's information is limited to the system ${\cal H}$ dependently on the information on his local system.
Since the dimension of ${\cal H}$ is $d$, when the user could recover the messages $X$,
the system ${\cal H}$ has no information for $Z$.
Hence, the server secrecy holds under the dishonest-user model.

Indeed, it is known that bit commitment is possible under several GPTs including the SEP model \cite{BDLT}.
One-server symmetric PIR protocol under the dishonest-server and dishonest-user model is often called 
oblivious transfer (OT).
Although it is known that bit commitment can be reduced to OT \cite{Kilian}, 
the reverse is not true using only classical communication \cite{Salvail}. 
Since Protocol \ref{PR2C} is a symmetric C-QPIR protocol for the dishonest-user model with $\sff = 2$ under the SEP model, 
it can be considered as an interesting example that cannot be obtained from the result \cite{BDLT}.
\fi
\section{Conclusion} \Label{sec:conclusion}
We have shown an exponential gap for the communication complexity of one-server Q-QPIR in the final-state criterion or
under the honest-server model between the existence and the non-existence of prior entanglement.
For this aim, as the first step,
we have proposed an efficient one-server one-round C-QPIR protocol
in the final-state criterion.
Also, we have shown that the protocols proposed in \cite{KLLGR16}
satisfies the secrecy in the all-round criterion 
under the honest server model.
Then, as the second step,
we have proved that the trivial solution of downloading all messages is optimal even in the final-state criterion 
for honest one-server Q-QPIR,
	which is a similar result to that of classical PIR 
	but different from C-QPIR.
As the third step,
we have developed a conversion from any C-QPIR protocol 
to a Q-QPIR protocol, 
which yields an efficient Q-QPIR protocol 
with prior entanglement from a C-QPIR protocol.
The proposed protocols exhibit an exponential improvement over the Q-QPIR's trivial solution.

In fact, Protocols \ref{PR2} and \ref{PR2B} work as 
one-server one-round C-QPIR protocol
in the final-state criterion 
under the honest-server model or the specious-server model.
However, Theorem \ref{theo:multiround} shows that
no analogy of Protocol \ref{PR2} nor \ref{PR2B} works for Q-QPIR protocol
under similar settings without prior entanglement.
This impossibility is caused by the non-cloning property of the quantum system,
i.e., the property that 
the noiseless channel has no information leakage to the third party,
because the proof of Theorem \ref{theo:multiround} relies on the fact that
noiseless quantum communication ensures that 
the entropy of the final state on the third party 
is equal to the entropy of 
the final state on the composite system comprising the output system and the reference system. 
This impossibility is one of the reasons for our obtained exponential gap.

The above exponential gap has been established under three problem settings.
The first and the second are 
the final-state criterion under the honest-server model
and under the specious-server model.
The third is the all-round criterion under the honest-server model.
In other words, 
other problem settings do not have such an exponential improvement by using prior entanglement.
This exponential improvement is much larger than the improvement 
achieved through the use of dense coding \cite{BW}.
This exponential improvement can be considered as a useful
application of use of prior entanglement.
It is an interesting open problem 
to find similar exponential improvement by using
 prior entanglement.

\section*{Acknowledgement}
SS was supported by JSPS Grant-in-Aid for JSPS Fellows No.\ JP20J11484.
FLG was partially supported by JSPS KAKENHI grants Nos. JP20H04139 and JP21H04879.
MH was supported in part by the National
Natural Science Foundation of China (Grants No. 62171212) and
Guangdong Provincial Key Laboratory (Grant No. 2019B121203002).


\begin{thebibliography}{99}
\bibitem{BBCJPW93}
C. H. Bennett, G. Brassard, C. Cr\'epeau, R. Jozsa, A. Peres, and W. K. Wootters, 
``Teleporting an unknown quantum state via dual classical and Einstein-Podolsky-Rosen channels,'' 
{\em Phys. Rev. Lett.}, {\bf 70}(13), 1895--1899, 1993.

\bibitem{BW}
C. H. Bennett and S. J. Wiesner, 
``Communication via One- and Two-Particle Operators on Einstein-
Podolsky-Rosen States,'' 
{\em Phys. Rev. Lett.}, {\bf 69}, 2881 (1992).

\bibitem{WDLLL}
C. Wang, F.-G. Deng, Y.-S. Li, X.-S. Liu, and G.-L. Long, 
``Quantum secure direct communication with high-dimension quantum superdense coding,'' 
{\em Phys. Rev. A}, {\bf 71}, 044305 (2005).

\bibitem{WLH}
J. Wu, G.-L. Long, and M. Hayashi, 
``Quantum secure direct communication with private dense coding using a general preshared quantum state,'' 
{\em Physical Review Applied}, {\bf 17}, 064011 (2022).

\bibitem{KdW03}
I. Kerenidis and R. de Wolf,
``Exponential lower bound for 2-query locally decodable codes via a quantum argument,'' 
{\em Proc. 35th ACM symposium on Theory of computing (STOC' 03)}, 
June 2003. 
pp 106 -- 115.

\bibitem{KdW04}
I. Kerenidis and R. de Wolf, ``Quantum symmetrically-private information retrieval,''
{\em Information Processing Letters}, {\bf 90}, 109--114, 2004.

\bibitem{Ole11}
L. Olejnik, 
``Secure quantum private information retrieval using phase-encoded queries,'' 
{\em Phys. Rev. A}, {\bf 84}, 022313, 2011.

\bibitem{BB15}
\"{A}. Baumeler and A. Broadbent, 
``Quantum Private Information Retrieval has linear communication complexity,'' 
{\em Journal of Cryptology}, {\bf 28}, 161--175, 2015.

\bibitem{LeG12}
F. Le Gall, 
``Quantum Private Information Retrieval with Sublinear Communication Complexity,'' 
{\em Theory of Computing}, {\bf 8}(16), 369 -- 374, 2012.

\bibitem{KLLGR16}
I. Kerenidis, M. Lauri\`{e}re, F. Le Gall, and M. Rennela, 
``Information cost of quantum communication protocols,'' 
{\em Quantum information \& computation,} 
{\bf 16}(3-4), 181--196, 2016.

\bibitem{ABCGLS19}
D. Aharonov, Z. Brakerski, K.-M. Chung, A. Green, C.-Y. Lai, and O. Sattath,
``On Quantum Advantage in Information Theoretic Single-Server PIR,'' 
{\em In: Ishai Y., Rijmen V. (eds) EUROCRYPT 2019,} Springer, Cham, vol.~11478, 2019.

\bibitem{SH19}
S. Song and M. Hayashi, 
``Capacity of Quantum Private Information Retrieval with Multiple Servers,'' 
{\em IEEE Trans. Inf. Theory}, 
{\bf 67}, no. 1,  452 -- 463, 2021.
                
\bibitem{SH19-2}
S. Song and M. Hayashi, 
``Capacity of Quantum Private Information Retrieval with Collusion of All But One of Servers,''
{\em IEEE Journal on Selected Areas in Information Theory},  
{\bf 2}, no. 1, 380 -- 390, 2021.
                
\bibitem{SH20}
S. Song and M. Hayashi, 
``Capacity of Quantum Private Information Retrieval with Colluding Servers,''
{\em IEEE Trans. Inf. Theory}, 
{\bf 67}, no. 8,  5491 -- 5508, 2021.

\bibitem{AHPH20}
M. Allaix, L. Holzbaur, T. Pllaha, and C. Hollanti,
``Quantum Private Information Retrieval From Coded and Colluding Servers,''
{\em IEEE Journal on Selected Areas in Information Theory,} 
{\bf 1}, no.~2, 599 -- 610, 2020.
 
\bibitem{ASHPHH21}
 M. Allaix, S. Song, L. Holzbaur, T. Pllaha, M. Hayashi, and C. Hollanti,
``On the Capacity of Quantum Private Information Retrieval from MDS-Coded and Colluding Servers,''
{\em IEEE Journal on Selected Areas in Communications},
{\bf 40}, no. 3, 885 -- 898, 2022.

\bibitem{KL20}
W. Y. Kon and  C. C. W. Lim,
``Provably Secure Symmetric Private Information Retrieval with Quantum Cryptography,''
{\em Entropy}, {\bf 23}, no.~1, 54, 2021.

\bibitem{WKNL21-1}
 C. Wang, W. Y. Kon, H. J. Ng, and C. C. Lim, 
 ``Experimental symmetric private information retrieval with measurement-device-independent quantum network'',
{\em Light. Sci. Appl.}, {\bf 11}, 268 (2022).
 
\bibitem{WKNL21-2}
C. Wang, W. Y. Kon, H. J. Ng, and C. C. Lim, 
``Experimental symmetric private information retrieval with quantum key distribution,'' 
{\em Quantum Information and Measurement VI 2021, 
F. Sciarrino, N. Treps, M. Giustina, and C. Silberhorn, eds., Technical Digest Series}, 
Optica Publishing Group, 2021.

\bibitem{Wie83}
S. Wiesner, ``Conjugate Coding,'' {\em SIGACT News}, {\bf 15}(1):78 -- 88, 1983.

\bibitem{GC01}
D. Gottesman and I. Chuang, 
``Quantum Digital Signatures,'' 2001, 
arXiv:quant-ph/0105032

\bibitem{Moc07}
C. Mochon, ``Quantum weak coin flipping with arbitrarily small bias,'' 2007, 
arXiv:0711.4114.

\bibitem{CK09}
A. Chailloux and I. Kerenidis. 
``Optimal Quantum Strong Coin Flipping.'' 
{\em Proc. 50th Annual IEEE Symposium on Foundations of Computer Science, FOCS 2009}, Atlanta, Georgia, USA, October 25-27, 2009.
pp. 527 -- 533. 

\bibitem{ACG+16}
D. Aharonov, A. Chailloux, M. Ganz, I. Kerenidis, and L. Magnin. 
``A Simpler Proof of the Existence of Quantum Weak Coin Flipping with Arbitrarily Small Bias,'' {\em SIAM J. Comput.}, {\bf 45}(3):
633 -- 679, 2016.

\bibitem{CMS99}
C. Cachin, S. Micali, and M. Stadler, 
``Computationally Private Information Retrieval with Polylogarithmic Communication,'' 
{\em Advances in Cryptology - EUROCRYPT '99,} pp. 402--414, 1999. 

\bibitem{Lipmaa10}
H. Lipmaa, ``First CPIR Protocol with Data-Dependent Computation,'' 
{\em Proceedings of the 12th International Conference on Information Security and Cryptology,} 
pp. 193--210, 2009. 

\bibitem{BS03}
A. Beimel and Y. Stahl, ``Robust information-theoretic private information retrieval,'' 
{\em Proceedings of the 3rd International Conference on Security in Communication Networks (SCN'02)}, pp. 326--341, 2003. 

\bibitem{Yekanin07}
S. Yekhanin, ``Towards 3-query locally decodable codes of subexponential length,'' 
{\em Journal of the ACM}, {\bf 55}, no.  1, 1-- 6, 2008.

\bibitem{DGH12}
C. Devet, I. Goldberg, and N. Heninger, ``Optimally Robust Private Information Retrieval,'' 
{\em 21st USENIX Security Symposium}, August 2012.

\bibitem{CHY15}
T. H. Chan, S.-W. Ho, and H. Yamamoto, 
``Private information retrieval for coded storage,''
{\em Proc. IEEE International Symposium on Information Theory (ISIT2015 )}, 
Hong Kong, China, June, 14--19, 2015.
pp. 2842-2846,

\bibitem{SJ17}
H. Sun and S. Jafar, 
``The capacity of private information retrieval,'' 
{\em IEEE Trans. Inf. Theory}, 
{\bf 63}, no. 7, 4075 -- 4088, 2017.

\bibitem{SJ17-2}
H. Sun and S. Jafar, 
``The Capacity of Symmetric Private Information Retrieval,'' 
{\em 2016 IEEE Globecom Workshops (GC Wkshps)}, Washington, DC, 2016, pp. 1--5.

\bibitem{SJ18}
H. Sun and S. Jafar, 
``The capacity of robust private information retrieval with colluding databases,''
{\em IEEE Trans. Inf. Theory}, 
vol. 64, no. 4, 2361 -- 2370, 2018.

\bibitem{FHGHK17}
R. Freij-Hollanti, O. W. Gnilke, C. Hollanti, and D. A. Karpuk, 
``Private information retrieval from coded databases with colluding servers,'' 
{\em SIAM J. Appl. Algebra Geometry}, vol. 1, no. 1, pp. 647--664, 2017.

\bibitem{KLRG17}
S. Kumar, H.-Y. Lin, E. Rosnes, and A. Graell i Amat, 
``Achieving maximum distance separable private information retrieval capacity with linear codes,'' {\em IEEE Trans. Inf. Theory}, 
{\bf 65}, no.~7,  4243 -- 4273, 2019.

\bibitem{WS17}
Q. Wang and M. Skoglund, 
``Symmetric private information retrieval for MDS coded distributed storage,'' 
{\em Proceedings of 2017 IEEE International Conference on Communications (ICC)}, 
pp. 1--6, May 2017.

\bibitem{LKRG18}
H.-Y. Lin, S. Kumar, E. Rosnes, and A. Graell i Amat, 
``An MDS-PIR capacity-achieving protocol for distributed storage using non-MDS linear codes,'' 
{\em Proc. IEEE International Symposium on Information Theory (ISIT2018)}, 
Talisa Hotel in Vail, Colorado, USA, June, 17 -- 22, 2018.
pp. 966 -- 970, 
               
\bibitem{BU18}
K. Banawan and S. Ulukus,
``The Capacity of Private Information Retrieval from Coded Databases,''
{\em IEEE Trans. Inf. Theory}, 
{\bf 64}, no. 3, 1945 -- 1956, 2018,

\bibitem{TSC18-2}
C. Tian, H. Sun and J. Chen, 
``A Shannon-Theoretic Approach to the Storage-Retrieval Tradeoff in PIR Systems,'' 
{\em Proc. IEEE International Symposium on Information Theory (ISIT2018)}, 
 Talisa Hotel in Vail, Colorado, USA, June, 17 -- 22, 2018.
pp. 1904--1908.

\bibitem{Tandon17}
R. Tandon, 
``The capacity of cache aided private information retrieval,'' 
{\em Proc. 2017 55th Annual Allerton Conference on Communication, Control, and Computing (Allerton)}, 
pp. 1078--1082, 2017.

\bibitem{BU19}
K. Banawan and S. Ulukus, 
``The capacity of private information retrieval from byzantine and colluding databases,'' 
{\em IEEE Trans. Inf. Theory}, 
{\bf 65}, no. 2, 1206--1219, 2019.

\bibitem{L19}
L. Holzbaur, R. Freij-Hollanti, J. Li, and C. Hollanti, 
``Towards the Capacity of Private Information Retrieval from Coded and Colluding Servers,'' 
{\em IEEE Trans. Inf. Theory}, 
{\bf 68}, no. 1, 517 -- 537, 2022

\bibitem{KGHERS19}
S. Kadhe, B. Garcia, A. Heidarzadeh, S. El Rouayheb, A. Sprintson,
``Private information retrieval with side information,'' 
{\em IEEE Trans. Inf. Theory}, 
{\bf 66} no. 4, 2032--2043, 2019.

\bibitem{TGKFH19}
R. Tajeddine, O. W. Gnilke, D. Karpuk, R. Freij-Hollanti and C. Hollanti, 
``Private Information Retrieval From Coded Storage Systems With Colluding, Byzantine, and Unresponsive Servers," 
{\em IEEE Trans. Inf. Theory}, {\bf 65}, no. 6, 3898 -- 3906, 2019. 

\bibitem{GLM}
V. Giovannetti, S. Lloyd, and L. Maccone,
``Quantum private queries,''
{\em Phys. Rev. Lett.}, {\bf 100}, 230502 (2008)

\bibitem{DNS10}
F. Dupuis, J.B. Nielsen, and L. Salvail, 
``Secure two-party quantum evaluation of unitaries against specious adversaries'', 
in {\em Proc. 30th Annual Conference on Advances in Cryptology (CRYPTO'10)},
(Springer, Berlin, 2010), pp. 685--706, 2010.

\bibitem{CGKS98} 
B. Chor, O. Goldreich, E. Kushilevitz, and M. Sudan, 
``Private information retrieval,'' 
{\em Journal of the ACM,} {\bf 45}(6), 965--981, 1998.

\bibitem{Hayashi}
M. Hayashi,
{\em Quantum Information Theory: Mathematical Foundation}, 
{\em Graduate Texts in Physics}, Springer (2017).

\bibitem{KI}
 M. Koashi and N. Imoto, 
 {\em Phys. Rev. A}, {\bf 66}, 022318 (2002).

\bibitem{HJPW}
 P. Hayden, R. Jozsa, D. Petz, and A. Winter, 
{\em Commun. Math. Phys.}, {\bf 246}, 359 (2004).

\bibitem{BMPV}
R. Blume-Kohout, H. K. Ng, D. Poulin, and L. Viola, 
{\em Phys. Rev. A}, {\bf 82}, 062306 (2010).

\bibitem{WSM}
E. Wakakuwa, A. Soeda, and M. Murao, 
{\em IEEE Trans. Inf. Theory}, 
{\bf 63}, 1280 (2017).

\bibitem{269}
M. Horodecki, 
``Limits for compression of quantum information carried by ensembles of mixed states,'' 
{\em Phys. Rev. A,}  {\bf 57}, 3364--3369, 1998.

\bibitem{35}
H. Barnum, C. M. Caves, C. A. Fuchs, R. Jozsa, and B. Schumacher, 
``On quantum coding for ensembles of mixed states,'' 
{\em J. Phys. A Math. Gen.,} {\bf 34}, 6767--6785, 2001.

\bibitem{201}
M. Hayashi, 
``Exponents of quantum fixed-length pure state source coding,'' 
{\em Phys. Rev. A,} {\bf 66}, 032321, 2002.
  
\bibitem{288}
R. Jozsa and B. Schumacher, 
``A new proof of the quantum noiseless coding theorem,'' 
{\em J. Mod. Opt.,} {\bf 41}(12), 2343--2349, 1994.
	
\bibitem{423}
B. Schumacher, 
``Quantum coding,'' {\em Phys. Rev. A,} {\bf 51}, 2738--2747, 1995.
	
\bibitem{290}
R. Jozsa, M. Horodecki, P. Horodecki, and R. Horodecki, 
``Universal quantum information compression,'' 
{\em Phys. Rev. Lett.,} {\bf 81}, 1714, 1998.

\bibitem{AYH}
H. Arai, Y. Yoshida, and M. Hayashi,
``Perfect Discrimination of Non-Orthogonal Separable Pure States on Bipartite System in General Probabilistic Theory,''
{\em Journal of Physics A: Mathematical and Theoretical}, 
{\bf 52}, 465304 (2019).

\bibitem{BDLT}
H. Barnum, O. C. Dahlsten, M. Leifer, and B. Toner,
``Nonclassicality without entanglement enables bit commitment",
{\em IEEE Information Theory Workshop, IEEE}, 2008, pp. 386–390.

\bibitem{Kilian}
J. Kilian, 
``Founding crytpography on oblivious transfer''. 
{\em Proc. 20th ACM symposium on Theory of computing (STOC' 88)}, 
June 1988. 
pp 20 -- 31.

\bibitem{Salvail}
L. Salvail,  
``The Search for the Holy Grail in Quantum Cryptography,'' 
In Lectures on Data Security: 
Modern Cryptology in Theory and Practice; 
Damg\r{a}rd, I.B., Ed.; Springer Berlin Heidelberg: Berlin, Heidelberg, 1999; pp. 183–216. 

\end{thebibliography}
\end{document}